\DeclareMathOperator{\imag}{Im}
\DeclareMathOperator{\linspan}{Span}
\DeclareMathOperator{\supp}{Supp}
\DeclareMathOperator{\setv}{Set}
\newcommand{\bbN}{\mathbb{N}}
\newcommand{\bbNp}{\mathbb{N}^*}
\newcommand{\bbZ}{\mathbb{Z}}
\newcommand{\ef}{\mathbb{F}}
\newcommand{\efpi}{\ef_p}
\newcommand{\efq}{\ef_q}
\newcommand{\eftwo}{\ef_2}
\newcommand{\eftwomm}[1]{\ef_{2^{#1}}}
\newcommand{\eftwom}{\eftwomm{m}}
\newcommand{\bs}{\boldsymbol}
\newcommand{\veps}{\varepsilon}
\newcommand{\tl}{\tilde}
\newcommand{\RM}{\mathrm{RM}}
\newcommand{\bch}{\mathrm{BCH}}
\newcommand{\ebch}{\mathrm{eBCH}}
\newcommand{\bbch}[1]{\bch(#1)}
\newcommand{\bbchd}{\bbch{d}}
\newcommand{\bbchdone}{\bbch{d-1}}
\newcommand{\ebbch}[1]{\ebch(#1)}
\newcommand{\ebbchd}{\ebbch{d}}
\newcommand{\wh}{\mathrm{w}_{\mathrm{H}}}
\newcommand{\pdvar}[1]{P_{#1}}
\newcommand{\pd}{\pdvar{d}}
\newcommand{\pdone}{\pdvar{d-1}}
\newcommand{\polsrvar}[1]{\eftwom[X_1,\ldots,X_m]_{\leq #1}}
\newcommand{\polsr}{\polsrvar{r}} 
\newcommand{\bx}{\bs{x}}
\newcommand{\boldf}{\bs{f}}
\newcommand{\boldh}{\bs{h}}
\newcommand{\roots}{\mathrm{Roots}}
\newcommand{\one}{\bs{1}}
\newcommand{\grobner}{Gr\"obner}
\newcommand{\fld}[2]{{#1}_{#2}^{\mathrm{fld}}}
\newcommand{\vct}[2]{{#1}_{#2}^{\mathrm{vct}}}
\newcommand{\wt}{\mathrm{wt}}
\newcommand{\tr}{\mathrm{Tr}}
\newcommand{\trv}[2]{\tr^{#1}_{#2}}
\newcommand{\nrm}{\mathrm{N}}
\newcommand{\normv}[2]{\nrm^{#1}_{#2}}
\newtheorem{proposition}{Proposition}[section]
\newtheorem{definition}[proposition]{Definition}
\newtheorem{corollary}[proposition]{Corollary}
\newtheorem{lemma}[proposition]{Lemma}
\newtheorem{theorem}[proposition]{Theorem}
\newtheorem{remark}[proposition]{Remark}
\newtheorem{example}[proposition]{Example}
\title{Efficient Algorithms for Constructing Minimum-Weight Codewords
in Some Extended Binary BCH Codes}
\author{ 
\IEEEauthorblockN{Amit Berman, Yaron Shany, and Itzhak Tamo}
\thanks{Amit Berman and Yaron Shany are with Samsung  Semiconductor Israel R\&D Center, 146 Derech
Menachem Begin St., Tel Aviv 6492103, Israel. Emails: \{amit.berman,
yaron.shany\}@samsung.com}
\thanks{Itzhak Tamo is with the Department of Electrical Engineering-Systems, Tel Aviv
University, Tel Aviv 6997801, Israel, and with Samsung  Semiconductor Israel R\&D Center, 146 Derech
Menachem Begin St., Tel Aviv 6492103, Israel. Email: zactamo@gmail.com}
}
\begin{document}
\maketitle

\begin{abstract}
We present $O(m^3)$ algorithms for specifying the support of
minimum-weight codewords of extended binary BCH codes of length
$n=2^m$ and designed distance $d(m,s,i):=2^{m-1-s}-2^{m-1-i-s}$ for some
values of $m,i,s$, where $m$ may grow to infinity. Here, the support is
specified as the sum of two sets: a set of $2^{2i-1}-2^{i-1}$ elements,
and a subspace of dimension $m-2i-s$, specified by a basis.

In some detail, for designed distance $6\cdot 2^j$,
$j\in\{0,\ldots,m-4\}$, we have a deterministic algorithm for even
$m\geq 4$, and a probabilistic algorithm with success probability
$1-O(2^{-m})$ for odd $m>4$.  For designed distance $28\cdot 2^j$,
$j\in\{0,\ldots, m-6\}$, we have a probabilistic algorithm with success
probability $\geq\frac{1}{3}-O(2^{-m/2})$
for even $m\geq 6$. Finally, for designed 
distance $120\cdot 2^j$, $j\in\{0,\ldots,m-8\}$, we have a
deterministic algorithm for $m\geq 8$ divisible by $4$.  
We also show how Gold
functions can be used to find the support of minimum-weight words for
designed distance $d(m,s,i)$ (for $i\in\{0,\ldots,\lfloor
m/2\rfloor\}$, and $s\leq m-2i$) whenever $2i|m$.
  
Our construction builds on results of Kasami and Lin (IEEE Trans.~IT,
1972), who proved that for extended binary BCH codes of designed
distance $d(m,s,i)$ (for integers $m\geq 2$, $0\leq i\leq \lfloor
m/2\rfloor$, and $0\leq s\leq m-2i$), the minimum distance equals the
designed distance. The proof of Kasami and Lin makes use of a
non-constructive existence result of Berlekamp (Infom.~Contrl., 1970),
and a constructive ``down-conversion theorem'' that converts some
words in BCH codes to lower-weight words in BCH codes of lower
designed distance. Our main contribution is in replacing the
non-constructive counting argument of Berlekamp by a low-complexity
algorithm. 

In one aspect, the current paper extends the results of Grigorescu and
Kaufman (IEEE Trans.~IT, 2012), who presented explicit minimum-weight
codewords for extended binary BCH codes of designed distance \emph{exactly}
$6$ (and hence also for designed distance $6\cdot 2^j$, by a
well-known ``up-conversion theorem''), as we cover more cases of the
minimum distance. In fact, we prove that the codeword
constructed by Grigorescu and Kaufman is a special case of the
current construction. However, the minimum-weight codewords we
construct do not generate the code, and are not affine generators,
except, possibly, for a designed distance of $6$.
\end{abstract}

\section{Introduction}

\subsection{Motivation and known results}
Determining the true minimum distance of BCH codes is a notoriously
hard problem that has been studied extensively in the coding theory
literature; see, e.g., \cite{Be70}, \cite{KaLi72}, \cite{AS94},
\cite{Li17}, and the references therein.  While the minimum distance
can be sometimes settled by non-constructive arguments (e.g., as in
\cite{Be70}), there is a recent interest in exhibiting explicit
minimum-weight codewords.  

A {\it single-orbit affine generator} (or simply,
an {\it affine generator}) of a linear code (if exists) is
a codeword whose orbit under the action of the affine group
generates the code as a vector space.
Applications to property testing motivated papers such as
\cite{GKS12}, \cite{GK12}, and \cite{MoSo18} to search for a single
low-weight\footnote{Throughout, ``weight'' and
``distance'' refer to Hamming weight and Hamming distance,
respectively.} affine generator for binary BCH codes. Both
\cite[pp.~15--16]{GKS12} and \cite[Problem 2]{GK12} raised the
following two questions: 

\begin{description}

\item[Q1] What is an explicit low-weight codeword of a (primitive,
narrow-sense) binary BCH code of a given length and designed minimum
distance?

\item[Q2] (Harder) What is an explicit basis consisting of low-weight
codewords for the above BCH code?

\end{description}

There are some cases where the answer to Q1 is well-known. For
example, the binary vector of length $2^m$ supported on an affine
subspace $V\subseteq \eftwom$ (where for a prime power $q$, $\efq$ is
the finite field of $q$ elements, and $m\in \bbNp$) is a
minimum-weight codeword of the extended binary BCH code of designed
distance $|V|$, since it is in the Reed--Muller subcode (see, e.g.,
\cite[Ch.~13]{MacSl}).\footnote{Note that in general, such a codeword
cannot be an affine generator, as it is in proper affine-invariant
subcode.} Also, when $\eftwom^*$ has a non-trivial 
subgroup $G$ (i.e., when $2^m-1$ is not a prime), the binary
vector of length $2^m-1$ supported on $G$ is a minimum-weight
codeword of the binary BCH code of designed distance $|G|$ (this can
be easily verified directly; see also \cite[Coro.~9.4]{PeWe}).

A na\"ive probabilistic algorithm for finding minimum-weight words
when it is known that the designed distance equals the actual distance
$d$, is
to draw uniformly vectors of weight $t+1$, where $t:=\lfloor (d-1)/2
\rfloor$, and apply any efficient decoding algorithm for the BCH
code. However, it can be verified that the success probability of this
algorithm is about\footnote{This approximation is valid when
the number of minimum-weight words is close to the binomial
distribution. This is indeed the case if $t$ is fixed and the length 
grows \cite{KL99}.} $1/t!$, making it useless for moderately high
values of $t$. 

In \cite{Au96}, the problem of finding low-weight codewords in a
cyclic code was translated to the problem of solving a large system of
polynomial equations obtained from Newton's identities. It is then
suggested to use elimination theory through \grobner{}-bases and
Buchberger's algorithm for solving these equations. It is shown in
\cite{Au96} that this method is
useful for settling some questions regarding short codes (e.g.,
finding the exact minimum distance and the number of minimum-weight
codewords in a dual BCH code of length $63$). However, the
complexity is in general exponential in the code length. 

In \cite{AS94}, there is an algorithm for finding the locator
polynomial of {\it idempotent} minimum-weight codewords of binary BCH
codes. However, the results of the current paper cover 
weights that are not covered in \cite{AS94}.\footnote{For example,
considering words of weight $27$, not all even
values of $m$ are covered in \cite[Table I]{AS94}, while in the current
paper we present a low-complexity probabilistic algorithm for all even
$m$ (see Section \ref{sec:ci3}).} Indeed, the methods of the
current paper are entirely different, and are not limited to
idempotent minimum-weight words. 

An algorithm for finding minimum-weight words in an arbitrary
linear code appears in \cite{CC98}. While this algorithm performs well
in practice, and was used to find minimum-weight words in BCH
codes of length $511$, its worst-case complexity is exponential
in the code length.

A partial answer to Q2 appears in
\cite{GKS12}, where it is proved that under certain conditions, there
exists a low-weight affine generator for long binary BCH codes of a
fixed designed distance. A progress toward an explicit low-weight affine
generator was made in \cite{GK12}, 
where the support of an affine generator of weight $6$ is specified
explicitly for extended double error-correcting BCH codes.  The
results of \cite{GK12} were generalized in \cite{MoSo18} to a larger
class of extended cyclic codes\footnote{The zeros of
the codes in \cite{MoSo18} are $\alpha,\alpha^{2^i+1}$ and their
conjugates, for some primitive $\alpha$ in $\eftwom$, where
$\gcd(m,i)=1$. As commented in \cite{MoSo18}, in this sense, the codes
are related to the Gold functions $x\mapsto x^{2^i+1}$.} of minimum
distance $6$. 

In addition to the theoretical motivation raised in Q1, an additional 
motivation comes from the need for debugging 
coding schemes involving BCH codes for very low
probabilities of error. For this purpose, it is required
to generate artificial error vectors that assure that the decoder
encounters some (loosely speaking) rare event, as reaching these rare
events through a Monte Carlo simulation is infeasible. 

One example of such a rare event is a long sequence of zero {\it
discrepancies} in the Berlekamp-Massey algorithm \cite{Be66},
\cite{Ma69} for a non-zero error, that still results in a successful
decoding. Another 
example is in decoding generalized concatenated
codes\footnote{See, e.g., \cite{SSBZ08} and the references therein.}
with inner BCH codes. Here, such a rare event can be an error vector that
is undetected by the decoder of a first BCH code, but then successfully
decoded when the multi-stage decoding algorithm can finally decode it
using a second BCH decoder, for a second BCH code of about twice the
designed minimum distance. In both examples, explicit minimum-weight
codewords can be used. In fact, minimum-weight codewords constructed
by the methods of the current paper were used for debugging actual
implemented codes.

\subsection{Results and methods}
We present $O(m^3)$ algorithms for constructing the support of
minimum-weight codewords of extended binary BCH codes of length
$n=2^m$ and designed distance $d(m,s,i):=2^{m-1-s}-2^{m-1-i-s}$ for some
values of $m,i,s$ (to be specified shortly), where $m$ may grow to
infinity. Although $d(m,s,i)$
may be exponential in $m$, specifying the support amounts to
specifying two sets $X,B\subseteq \eftwom$ with $|X|+|B|=O(m)$, in
the following way.

For $S_1,S_2\subseteq \eftwom$, let $S_1+S_2:=\{x+y|x\in S_1, y\in
S_2\}$. The support of size $d(m,s,i)$ is given by
$X+\linspan_{\eftwo}(B)$, where $X,B\subseteq \eftwom$,  
$|X|=2^{2i-1}-2^{i-1}$, $|B|=m-2i-s$, and the elements of $B$ are
$\eftwo$-linearly independent. Note that we consider $i$ as constant,
while $m$ may grow to infinity.

We build on results of Kasami and Lin \cite{KaLi72} in a way that we
shall now describe. It was shown in \cite{KaLi72} that for integers
$m\geq 2$, $0\leq i\leq \lfloor m/2\rfloor$, and $0\leq s\leq m-2i$,
the true minimum distance of the extended BCH code of designed
distance $d(m,s,i)$ is indeed $d(m,s,i)$. The proof in \cite{KaLi72}
is based on two results: (1) a constructive ``down-conversion
theorem,'' \cite[Theorem 1]{KaLi72}
that enables to convert certain codewords in a BCH code to codewords
of lower weights in another BCH code of a lower designed distance, and
(2) a non-constructive existence result of Berlekamp \cite{Be70} for
the case $s=0$.

Our main contribution is in replacing Berlekamp's counting argument by
$O(m^3)$ algorithms for producing minimum-weight codewords in extended
BCH codes of designed distance $d(m,0,i)=2^{m-1}-2^{m-1-i}$ for
$i=2,3,4$, under some conditions on $m$ to be described ahead. The
output of the algorithm are the above two sets $X$ and $B$
(with $|X|=2^{2i-1}-2^{i-1}$, $|B|=m-2i$),
such that the support is $X+\linspan_{\eftwo}(B)$. 

We can then use the
down-conversion theorem with $X$ and $B$ to obtain, for all integer
$0\leq s\leq m-2i$, sets $X_s,B_s\subseteq \eftwom$, with
$|X_s|=2^{2i-1}-2^{i-1}$ and $|B_s|=m-2i-s$, such that
$X_s+\linspan_{\eftwo}(B_s)$ supports a minimum-weight codeword of the
extended BCH code of designed distance $d(m,s,i)$, where $X_0=X$
and $B_0=B$. For completeness, we also re-prove the down-conversion
theorem in a way that we find more intuitive: instead of working with
locator polynomials, we use the low-degree evaluated
polynomials in the definition of the BCH code as a subfield subcode of
a Reed--Solomon code.

More generally, we show that minimum-weight codewords of extended BCH
codes of designed distance $d(m,0,i)$ (for all $0\leq i\leq \lfloor
m/2\rfloor$) can be obtained in complexity $O(m^3)$ once a solution
with $\eftwo$-independent entries is found to a system of $i-1$
$\eftwo$-multilinear equations in $2i$ variables.
The equations are obtained using the following idea. 
In \cite{Be70}, it is shown that the intersection of the
second-order Reed--Muller code and the extended BCH code of designed
distance $d(m,0,i)$ contains a word of weight $d(m,0,i)$. For considering
Reed--Muller codes and extended BCH codes with the same coordinate
labeling, we must fix 
a basis for $\eftwom/\eftwo$. The idea is that instead of fixing a
basis and then trying to find the quadratic Boolean function in the
intersection, it is easier to fix the Boolean function and then
to search for a basis. In detail, one can fix the Boolean function
$(x_1,\ldots,x_m)\mapsto x_1x_2+x_3x_4+\cdots +x_{2i-1}x_{2i}$, and then
to search for a basis for which the evaluation vector of this function
becomes the evaluation vector of a polynomial from $\eftwom[X]$ of a
low enough degree.

For $i=2$ and even $m\geq 4$, we have a closed-form solution for the
equations with independent coordinates. For $i=2$ 
and odd $m>4$, we have a closed-form solution whose coordinates are
independent with probability $1-O(2^{-m})$. For $i=3$ and even $m\geq
6$, we have a probabilistic solution with success probability 
$\frac{1}{3}-O(2^{-m/2})$. Finally, for $i=4$ and $m\geq 8$ divisible
by $4$, we have a  closed-form solution.

We also show that for all $i$, minimum-weight codewords of weight
$d(m,s,i)$ can be obtained by using Gold functions (see, e.g.,
\cite{Le06}) if $2i|m$. We note that this result is different from that
\cite{MoSo18}, where the considered \emph{codes} are related to Gold
functions, have a minimum distance of $6$, and are in general not BCH
codes. 

Finally, we prove that the weight-$6$ codeword from \cite{GK12} is a
special case of the current construction,
corresponding to some solution of the system of equations for 
$i=2$. 
We comment that using an ``up-conversion theorem'' 
\cite[Theorem 9.5]{PeWe}, the codeword of \cite{GK12} can also be
used to reach all weights $6\cdot 2^j$ corresponding to
$d(m,s,2)$. Nevertheless, it is of interest to describe
solutions of the equations for $i=2$, for two main reasons: 1.~This is
the simplest example of the system of equations, and solutions for
this case provide insight for higher values of $i$, and
2.~such solutions may provide additional examples of
minimum-weight words, on top of those of \cite{GK12}. For
completeness, we also re-prove the up-conversion theorem in a more
direct  way, working with the evaluated polynomials instead of locator
polynomials.  

To summarize, our main contributions are as follows:

\begin{itemize}

\item We prove that minimum-weight codewords of extended BCH codes of
designed distance $d(m,s,i)$ can be obtained once a solution with
$\eftwo$-independent entries is found to a system of $i-1$
$\eftwo$-multilinear equations in $2i$ variables. The complexity of
moving from a solution of the equations to the actual support of a
minimum-weight codeword is $O(m^3)$.

\item We present solutions for $i=2,3,4$ as follows:

\begin{itemize}

\item For $i=2$, we have a closed-form solution with independent
coordinates for even $m\geq 4$. For odd $m>4$, we have a closed-form
random solution that solves the equations with probability $1$, and
has independent coordinates with probability $1-O(2^{-m})$. The
weights covered in this case are $6\cdot 2^j$,
$j\in\{0,\ldots,m-4\}$.

\item For $i=3$ and even $m\geq 6$, we have a probabilistic algorithm
with success probability at least $1/3-O(2^{-m/2})$ for finding a
solution with independent coordinates. The weights covered in this
case are $28\cdot 2^j$, $j\in\{0,\ldots, m-6\}$. 

\item For $i=4$ and $m\geq 8$ divisible by $4$, we have a closed-form
solution with independent coordinates. The weights covered in this
case are $120\cdot 2^j$, $j\in\{0,\ldots,m-8\}$.

\end{itemize}

\item When $2i|m$, we prove that Gold functions can be used to obtain
codewords of weight $d(m,s,i)$.

\item We re-prove the down-conversion theorem \cite[Theorem 1]{KaLi72}
and the up-conversion theorem \cite[Theorem 9.5]{PeWe} 
in a different way, using the evaluated polynomials
instead of locator polynomials. 

\end{itemize}

The different cases of $i$ and $m$ that are covered in the paper are
summarized in Table \ref{table:cover}. 

\begin{table*}[h]
\centering
{\small
\begin{tabular}{|c|c|c|c|c|}
\hline
$i$ & weights covered & $m$ &  where & remarks\\
\hline
$2$ & $6\cdot 2^j$,  & $\geq 4$, even & Prop.~\ref{prop:i2} & deterministic \\
\cline{3-5}
& $j\leq m-4$ & $\geq 5$, odd & Prop.~\ref{prop:i2odd} 
& probabilistic, success prob.~$1-O(2^{-m})$\\ 
\cline{3-5}
& & $m=ab$,  & 
Prop.~\ref{prop:constz} & deterministic\\
&& $\min\{a,b\}\geq 2$,&& \\
&& $\gcd(a,b)=1$&&\\
\hline
$3$ & $28\cdot 2^j$, & $\geq 6$, even & Prop.~\ref{prop:i3evenm} & probabilistic,
success prob.~$1/3-O(2^{-m/2})$\\ 
\cline{3-5}
& $j\leq m-6$ & $\geq 7$, odd & App.~\ref{app:heuristics} & heuristic\\
\hline
$4$ & $120\cdot 2^j$ & $\geq 8$, divisible by $4$ &
Prop.~\ref{prop:i4} & 
deterministic\\
& $j\leq m-8$ &&&\\
\hline
any & & divisible by $2i$ & Prop.~\ref{prop:gold} & deterministic,
via Gold functions\\
\hline
\end{tabular}
}
\caption{Cases covered in the paper.}
\label{table:cover}
\end{table*}

\subsection{Organization}
Section \ref{sec:preliminaries} includes some notation and definitions
used throughout the paper, and in particular some background on BCH and
Reed--Muller codes. As a preparation for the method used to construct
minimum-weight codewords, we demonstrate how this method can be used
to prove the well-known fact that Reed-Muller codes are subcodes of
extended BCH codes. 

In Section \ref{sec:conversion}, we recall the down-conversion and
the up-conversion theorems, and state them in a unified way that is more
suitable for the current paper. In Section
\ref{sec:main}, which is the heart of the paper, it is shown how the
problem for $s=0$ can be converted to a system of equations, and
solutions are provided in some cases. It is then shown how to use the
down-conversion theorem for moving from $s=0$ to arbitrary $s$, and some
concrete examples of the supports of minimum-weight codewords are
given.  A construction via Gold functions for the case $2i|m$ is
described in Section \ref{sec:gold}. Finally, Section
\ref{sec:conclusions} includes conclusions and open questions for
future research. 

The paper is supplemented by four appendices. In Appendix
\ref{app:conv}, we re-prove the conversion theorems in a new way. In
Appendix \ref{app:affine}, it is proved that for weight $>6$, the  
codewords constructed in this paper are \emph{not} affine generators,
as they are in a proper affine-invariant subcode.
In Appendix \ref{app:gk},
we show that the weight-$6$ affine generator of \cite{GK12} is a
special case of the current results, in the sense that it can be
obtained by an appropriate solution for the equations for the case
$i=2$. In particular, this shows that for weight $6$, there exists a
solution to the equations that, after applying the down-conversion
theorem, results in an affine generator. 
Finally, in Appendix \ref{app:heuristics}, we present a heuristic
probabilistic algorithm for the case $i=3$, $m$ odd.

\section{Preliminaries}\label{sec:preliminaries} 
In this section, we introduce some notation and recall some
well-known facts about BCH and Reed--Muller codes. Throughout,
unless otherwise noted, all vectors are row vectors, and the $i$-th
element of a vector $\bs{y}$ is denoted by $y_i$.  Also, $(\cdot)^t$
stands for matrix transposition. We assume familiarity with
basic properties of finite fields. The background material may be found,
e.g., in \cite[Sec.~V.5]{Lang}. The interested reader may find a more
comprehensive study of finite fields in \cite{LN97}.

\subsection{BCH and Reed--Muller codes}

\begin{definition}
For a field $K$ and an integer $r\in \bbNp$, let 
$$
K[X]_{r}:= \{f\in K[X]|\deg(f)\leq r \}.
$$
\end{definition}

From this point on, we fix $m\in \bbNp$ and let $n:=2^m-1$.  We also
fix a primitive element $\alpha\in \eftwom$, and an enumeration
$\{\beta_1,\ldots,\beta_n\}$ of $\eftwom^*$.

\begin{definition}{(BCH codes)}
{\rm
For $d\in \bbNp$, $d<n$, let 
$$
\pd:=\big\{f\in \eftwom[X]_{n-d}|\forall x\in \eftwom^*:
f(x)\in\eftwo\big\}.
$$
For odd $d$, the primitive binary BCH code, $\bbchd$, of length 
$n$ and designed distance $d$, is defined by 
$$
\bbchd:=\big\{(f(\beta_1),\ldots,f(\beta_n))|f\in \pd
\big\}. 
$$
}
\end{definition}
Note that $\pd$ is an $\eftwo$-vector space, and hence
$\bbchd\subseteq \eftwo^n$ is a linear code. 
Also, because a polynomial of degree $\leq n-d$ cannot have more than
$n-d$ roots, it is clear that the minimum Hamming distance of $\bbchd$
is at least $d$. By the same argument, the map $f\mapsto
(f(\beta_1),\ldots,f(\beta_n))$ is an injective map $\pd\to\eftwo^n$,
and so $\dim(\bbchd)=\dim(\pd)$.\footnote{This observation does not
immediately provide bounds on $\dim(\bbchd)$, but, since this
dimension is irrelevant for the current paper, we will not elaborate
on this subject.} From this point on, ``BCH code'' will mean
``primitive binary BCH code''.  

Any polynomial $f=a_0+a_1X+\cdots +a_{n-1}X^{n-1}\in \eftwom[X]$ of
degree at most $n-1$ satisfies 
\begin{multline*}
\sum_{i=1}^n f(\beta_i) = \sum_{i=0}^{n-1}f(\alpha^i) =
\sum_{i=0}^{n-1}\sum_{j=0}^{n-1}a_j\alpha^{ij} =\\
\sum_j a_j\sum_i(\alpha^j)^i = n a_0=nf(0)=f(0),
\end{multline*}
where the last equality follows since $n$ is odd. This means that for 
$f\in \pd$, $f(0)=\sum_{i=0}^{n-1}f(\beta_i)$ is a ``parity bit'' for
the vector $(f(\beta_i))_i$ (in particular, $f(0)$ is binary).

\begin{definition}{(Extended BCH codes)}
{\rm
For even $d\in \bbNp$, the extended BCH code, $\ebbchd$, of length
$n+1=2^m$ and designed distance $d$, is defined by
\begin{multline*}
\ebbchd:=\\
\big\{(f(0),f(\beta_1),\ldots,f(\beta_n))|f\in \pdone
\big\}. 
\end{multline*}
}
\end{definition}
Note that from the discussion before the definition, we may write 
\begin{multline*}
\ebbchd=\\
\Big\{\big(\sum_{j=1}^n c_j,c_1,\ldots,
c_n\big)\Big|(c_1,\ldots,c_n)\in \bbchdone \Big\},
\end{multline*}
and therefore $\bbchdone$ can be obtained from $\ebbchd$ by puncturing the
first coordinate. Note that since $d-1$ is
odd by assumption and the minimum distance of $\bbchdone$ is at least
$d-1$, the minimum distance of $\ebbchd$ is at least
$d$.\footnote{Another way to see that the minimum distance of $\ebbchd$
is at least $d$ is by noting that we evaluate polynomials of degree
$\leq (2^m-1)-(d-1)=2^m-d$ on $2^m$ points.} 

Since for any $a,b\in \eftwom$ and
$f(X)\in\pd$, also $f(aX+b)\in \pd$, the affine group, consisting of
all permutations of the form $x\mapsto ax+b$ (with $a,b\in\eftwom$,
$a\neq 0$) is contained in the automorphism group of $\ebbchd$. As
this group acts transitively\footnote{This means that for any two
coordinates there exists an element in the group whose action takes
one coordinate to the other.} (in fact, for this we only need the
translation group $\{x\mapsto x+b\}$), any word of weight $d$ (if
exists) in $\ebbchd$ may be used to construct words of weight $d$ and
$d-1$ in $\bbchdone$. As the affine group contains a cycle of
length $n$ that fixes $0$ (consider $x\mapsto ax$ for primitive $a$),
this also shows that $\bbchd$ is a cyclic code under some ordering of
the coordinates.

We now turn to Reed--Muller codes. Let $V$ be the vector space of
functions $\eftwo^m\to\eftwo$ (recall that any such function is in
fact a polynomial function in $m$ variables). It can be verified that
the kernel of the map $\eftwo[X_1,\ldots,X_m]\to V$ which assigns to a
polynomial the associated polynomial function is 
the ideal $(X_1^2+X_1,\ldots,X_m^2+X_m)$.

\begin{definition}{(Reed--Muller codes)}
{\rm
For $r\in\{1,\ldots,m\}$, let
\begin{multline*}
M_r:=
\{X_{i_1}X_{i_2}\cdots X_{i_r}|\\ i_1,\ldots,i_r\in
\{1,\ldots,m\}\text{ are distinct}\}, 
\end{multline*}
and
$$
M_{\leq r}:=\{1\}\cup M_1 \cup \cdots\cup M_r.
$$
Let $\polsr$ be the $\eftwo$-linear span of $M_{\leq r}$. Fixing an
enumeration $\{\bx_1,\ldots,\bx_{2^m}\}$ of $\eftwo^m$, the
$r$-th-order Reed--Muller code, $\RM(r,m)$, is defined by
\begin{multline*}
\RM(r,m):=\big\{\big(f(\bx_1),\ldots,f(\bx_{2^m})\big)\big|\\ f\in
\polsr \big\}. 
\end{multline*}
}
\end{definition}
Note that the functions $\eftwo^m\to \eftwo$ corresponding to the
monomials in $M_{\leq r}$ are linearly
independent, because it can be verified that no polynomial in $\polsr$
is in $(X_1^2+X_1,\ldots,X_m^2+X_m)$.
Hence,
$$
\dim(\RM(r,m))=1+m+\binom{m}{2}+\cdots+\binom{m}{r}.
$$

In general, if $X=\{x_1,\ldots,x_{\ell}\}$ and $Y$ are
sets (for $\ell\in \bbNp$), and $f\colon X\to Y$ is a function, we will write
$\bs{f}:=(f(x_1),\ldots,f(x_{\ell}))$ for the {\bf evaluation vector}
of $f$. It should be always clear from the context what it the domain
of $f$. Also, if $Z$ is yet another set, and
$X\overset{f}{\to}Y\overset{g}{\to}Z$ are functions, we will write $\bs{g\circ
f}:=\big(g(f(x_1)),\ldots, g(f(x_{\ell}))\big)$ for the evaluation
vector of $g\circ f$.

\subsection{The trace map and dual bases}
To recall the connection between Reed--Muller codes and BCH codes, it
will be convenient to work with the trace map.
\begin{definition}
{\rm
The trace map, $\tr\colon \eftwom\to \eftwo$, is
defined by  
\begin{eqnarray*}
\tr(x) & := & x+\sigma(x)+\sigma^2(x)+\cdots+{\sigma^{m-1}}(x),\\
       & = & x+x^2+x^4+\cdots+x^{2^{m-1}},
\end{eqnarray*}
where $\sigma\colon x\mapsto x^2$ is the Frobenius automorphism. 
}
\end{definition}
Since a non-zero polynomial of degree $2^{m-1}$ cannot have $2^m$
roots, the trace is not identically zero,
and hence the mapping $(x,y)\mapsto \tr(xy)$ is a non-degenerate
bilinear form. As 
usual in this case of finite dimension, the map $\varphi\colon
x\mapsto(y\mapsto\tr(xy))$ is an isomorphism of $\eftwo$-vector spaces
between $\eftwom$ and its dual space (see, e.g., \cite[Theorem
VI.5.2]{Lang}). Let $b_1,\ldots,b_m$ be a basis of 
$\eftwom$ over $\eftwo$. We refer to the elements $b'_1,\ldots,b'_m$
that map under $\varphi$ to the dual basis of $b_1,\ldots b_m$ as the
{\bf trace-dual basis} (or simply as the {\bf dual basis}) of
$b_1,\ldots,b_m$. Since an isomorphism takes 
a basis to a basis, $b'_1,\ldots,b'_m$ is also a basis for $\eftwom$
over $\eftwo$. Note that the dual basis is uniquely defined by the
constraints $\tr(b_ib'_j)=\one(i=j)$ for all $i,j$. 

\subsection{Reed--Muller codes are subcodes of extended BCH codes}
With the basis $B=\{b_1,\ldots,b_m\}$, a function $f\colon
\eftwom\to\eftwo$ induces a function $\vct{f}{B}\colon \eftwo^m\to
\eftwo$ defined by
$\vct{f}{B}(x_1,\ldots,x_m):=f(x_1b_1+\cdots+x_mb_m)$. Conversely, a
function $g\colon\eftwo^m\to \eftwo$ induces a function
$\fld{g}{B}\colon \eftwom\to \eftwo$ defined by
$\fld{g}{B}(x_1b_1+\cdots+x_mb_m):=g(x_1,\ldots,x_m)$, and 
the functions $f\mapsto\vct{f}{B} $, $g\mapsto \fld{g}{B}$ are
inverses of each other. Note that if $B':=\{b'_1,\ldots,b'_m\}$ is the
dual basis of $B$, then for $f\colon x\mapsto \tr(b'_j x)$, we have 
$\vct{f}{B}(x_1,\ldots,x_m)=x_j$. In what follows, we will call the
function $(x_1,\ldots,x_m)\mapsto x_j$ (for $j\in\{1,\ldots,m\}$) the
$j$-th {\bf projection}.

The space of functions $\eftwom\to\eftwom$ is isomorphic
to the quotient $\eftwom[X]/(X^{2^m}+X)$. Write
$T(X):=X+X^2+\cdots+X^{2^{m-1}}$ for the lowest degree polynomial
corresponding to the trace function. As we have just seen, the function
$\eftwo^m\to \eftwo$ corresponding to the polynomial $T(b'_jX)$ is
mapped under $\vct{(\cdot)}{B}$ to the function corresponding to the
multivariate polynomial $X_j$.

The following proposition states the well-known fact that
Reed--Muller codes are subcodes of BCH codes. Using the trace
function, the proof (which appears to be new) does not require
identifying the zeros of punctured Reed--Muller codes as cyclic
codes. We include the proof in the paper as a warm-up to the method
used in Section \ref{sec:main}. 

\begin{proposition}\label{prop:subcode}
$\RM(r,m)\subseteq \ebbch{2^{m-r}}$. In particular, the
minimum distance of $\RM(r,m)$ is at least $2^{m-r}$.
\end{proposition}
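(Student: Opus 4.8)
The plan is to show that every generator of $\polsr$—that is, every monomial $X_{i_1}\cdots X_{i_k}$ with $k\leq r$ distinct indices, together with the constant $1$—evaluates, under the chosen enumeration of $\eftwo^m$, to a codeword of $\ebbch{2^{m-r}}$. Since $\RM(r,m)$ is the $\eftwo$-span of these evaluation vectors and $\ebbch{2^{m-r}}$ is a linear code, this suffices. The main device is the correspondence $g\mapsto \fld{g}{B}$ developed just above the statement: a monomial in the $X_j$'s is a product of projections, and each projection $(x_1,\ldots,x_m)\mapsto x_j$ is $\vct{(\cdot)}{B}$ applied to the field function $x\mapsto \tr(b'_jx)$, where $B'=\{b'_1,\ldots,b'_m\}$ is the trace-dual basis of $B$.

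First I would fix the basis $B$ and use it both to label the coordinates of $\RM(r,m)$ (via the enumeration of $\eftwo^m$) and to identify $\eftwo^m$ with $\eftwom$; this is the point where the two codes share a coordinate labeling. Under this identification the monomial $X_{i_1}\cdots X_{i_k}$ corresponds to the field function
$$
h\colon x\longmapsto \tr(b'_{i_1}x)\,\tr(b'_{i_2}x)\cdots \tr(b'_{i_k}x),
$$
a function $\eftwom\to\eftwo$. To conclude that its evaluation vector lies in $\ebbch{2^{m-r}}$, by the definition of the extended BCH code it is enough to exhibit a polynomial $f\in\pdvar{2^{m-r}-1}$ (i.e.\ a polynomial in $\eftwom[X]$ of degree at most $n-(2^{m-r}-1)$ taking values in $\eftwo$ on $\eftwom^*$) whose evaluation agrees with $h$ on all of $\eftwom$, including the value $f(0)$ in the extended coordinate.

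The key computation is a degree count. Writing $T(X)=X+X^2+\cdots+X^{2^{m-1}}$ for the trace polynomial, each factor $\tr(b'_{i_t}x)$ is the evaluation of $T(b'_{i_t}X)$, a polynomial whose nonzero monomials all have degree a power of $2$ between $1$ and $2^{m-1}$. Multiplying $k\leq r$ such factors and reducing modulo $X^{2^m}+X$ (which does not increase the degree beyond $n=2^m-1$), the resulting polynomial is a sum of monomials whose exponents are sums of $k$ powers of $2$; each such exponent has binary weight at most $k\leq r$, hence is at most $2^{m-1}+2^{m-2}+\cdots+2^{m-r}=2^m-2^{m-r}=n-(2^{m-r}-1)$. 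Thus the reduced polynomial lies in $\eftwom[X]_{n-(2^{m-r}-1)}$, and since it is $\eftwo$-valued on all of $\eftwom$ it lies in $\pdvar{2^{m-r}-1}$. One then checks that this polynomial evaluates to $h$ on $\eftwom^*$ and that its value at $0$ is the correct parity/extension bit, so the evaluation vector is in $\ebbch{2^{m-r}}$.

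The main obstacle I anticipate is bookkeeping rather than conceptual: one must verify carefully that reduction modulo $X^{2^m}+X$ keeps the binary weight of every exponent at most $r$ (equivalently, that collapsing repeated powers of $2$ via the Frobenius relation $x^{2^m}=x$ only lowers the weight), and one must confirm that the extended coordinate at $0$ behaves correctly—i.e.\ that the degree bound $\leq n-(2^{m-r}-1)$ is exactly what is needed so that the extension bit matches. The constant monomial and the monomials with $k<r$ need a brief separate remark (they satisfy a stronger degree bound and so certainly lie in the code). The ``in particular'' clause about the minimum distance is then immediate from the inclusion, since the minimum distance of $\ebbch{2^{m-r}}$ is at least its designed distance $2^{m-r}$, as established earlier in the excerpt.
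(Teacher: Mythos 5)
Your proposal is correct and follows essentially the same route as the paper: fix a basis and its trace-dual, realize each monomial $X_{i_1}\cdots X_{i_k}$ as the product $T(b'_{i_1}X)\cdots T(b'_{i_k}X)$, and bound the degree after reduction modulo $X^{2^m}+X$ by $2^{m-1}+\cdots+2^{m-r}=n-(2^{m-r}-1)$. The ``bookkeeping'' you defer---that reducing exponents which are sums of $k$ powers of $2$ keeps the binary weight at most $k$---is precisely the carrying argument the paper spells out (its $k_1,k_2$ decomposition), and your worry about the coordinate at $0$ is vacuous under the paper's definition of $\ebbchd$, since membership of $f$ in $\pdvar{2^{m-r}-1}$ already determines the extended coordinate as $f(0)$.
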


\begin{proof}
Fix a basis $B=\{b_1,\ldots,b_m\}$ of $\eftwom/\eftwo$, and let
$B'=\{b'_1,\ldots,b'_m\}$ be its dual basis. It is sufficient to
show that for any monomial in $M_{\leq r}$, there 
exists a polynomial $f\in \pdvar{2^{m-r}-1}$ that, after applying
$\vct{(\cdot)}{B}$, defines the same function $\eftwo^m\to
\eftwo$. So, for $s\leq r$, let $i_1,\ldots, 
i_s$ be distinct indices in $\{1,\ldots,m\}$, and consider the
monomial $\mu(X_1,\ldots,X_m):=X_{i_1}\cdots X_{i_s}\in
\eftwo[X_1,\ldots, X_m]$. As 
mentioned above, the polynomial $f(X):=T(b'_{i_1}X)\cdots
T(b'_{i_s}X)\in \eftwom[X]$ defines the same function as
$\mu(X_1,\ldots,X_m)$ (after applying $\vct{(\cdot)}{B}$).

Next, we would like to bound the degree of the
lowest-degree representative of the image of $f$ 
in $\eftwom[X]/(X^{2^m}+X)$. Observe that $f$ is a linear combination
of monomials of the form
\begin{equation}\label{eq:mon}
X^{2^{j_1}+2^{j_2}+\cdots+2^{j_s}}
\end{equation}
for some (not necessarily distinct) $j_1,\ldots,j_s\in
\{0,\ldots,m-1\}$. We claim that after identifying $X^{2^m}$ with $X$,
the highest possible degree, $d_{\max}$, is given by 
\begin{equation}\label{eq:maxdeg}
d_{\max}=2^{m-1}+2^{m-2}+\cdots + 2^{m-s}.
\end{equation}

Write\footnote{Note that the right-hand side of the following equation
is \emph{not} the binary expansion of the left-hand side.}
$\sum_{i=1}^s 2^{j_i}=\sum_{\ell=0}^{m-1}a_{\ell}2^{\ell} +a_m  
2^m$, where at first
$a_m=0$, and $a_{\ell}:=|\{i|j_i=\ell\}|$ (note that $\sum_{i=0}^m a_i
=s$). Consider a sequence of modifications of 
the $a_{\ell}$ in which we pick some $a_{\ell}\geq 2$ (if exists) for
$\ell\in\{0,\ldots,m-1\}$, and update $a_{\ell}\leftarrow a_{\ell}-2$ and
$a_{\ell+1}\leftarrow a_{\ell+1}+1$.\footnote{For example, if
$m=7$, $s=4$, $j_1=6$, $j_2=5$, $j_3=j_4=4$. Then, at first
$(a_0,\ldots,a_7)=(0,0,0,0,2,1,1,0)$, and the progression is
$(0,0,0,0,\bs{2},1,1,0)\to (0,0,0,0,0,\bs{2},1,0)\to
(0,0,0,0,0,0,\bs{2},0)\to (0,0,0,0,0,0,0,1)$, where the bold numbers are
the $a_{\ell}\geq 2$ picked as described above.} 
If there exists some $\ell\leq
m-1$ with $a_{\ell}\geq 2$, each such step strictly decreases
$\sum_{i=0}^m a_i$, while keeping all $a_{\ell}$
non-negative. Hence, the process must terminate after a finite number
of iterations with all $a_{\ell}\leq 1$ for all
$\ell\leq m-1$. Also, the modifications do not change the value of 
$\sum_{\ell=0}^m a_{\ell}2^{\ell}$.  

Hence, there exist $k_1,k_2\in \bbN$ with
$k_1+k_2\leq s$, and distinct
$j'_1,\ldots,j'_{k_2}\in \{0,\ldots,m-1\}$, such that
\begin{equation}\label{eq:rep}
\sum_{i=1}^s 2^{j_i} = k_1 2^m+\sum_{i=1}^{k_2} 2^{j'_i}  
\end{equation}
(and $k_1+k_2<s$ if the original $j_i$ are not distinct).

Re-writing the degree in (\ref{eq:mon})
as in the right-hand side of (\ref{eq:rep}) and identifying $X^{2^m}$
with $X$, we obtain a monomial of degree
\begin{multline*}
k_1 +\sum_{i=1}^{k_2} 2^{j'_i} \stackrel{(*)}{\leq} k_1 +
\sum_{i=1}^{k_2} 2^{m-i} \leq \\ \sum_{i=1}^{k_2+k_1} 2^{m-i}
\stackrel{(**)}{\leq}  \sum_{i=1}^{s} 2^{m-i},
\end{multline*}
where $(*)$ holds since (\ref{eq:maxdeg}) is clearly
valid for distinct $j_k$'s, while $(**)$ follows from $k_1+k_2\leq
s$.  This proves (\ref{eq:maxdeg}).
Now the proof follows immediately from (\ref{eq:maxdeg}), since
$$
\sum_{i=1}^{s} 2^{m-i}\leq \sum_{i=1}^{r} 2^{m-i} =
2^{m}-2^{m-r}=n-(2^{m-r}-1).   
$$
\end{proof}

Note that since the evaluation vector of the monomial $X_1\cdots X_r$
clearly has a Hamming weight of $2^{m-r}$, it follows from Proposition
\ref{prop:subcode} that the minimum distance of $\RM(r,m)$ is exactly
$2^{m-r}$. 

\section{The conversion theorems}\label{sec:conversion} 
In this section, we state the down- and up- conversion theorems
(\cite[Theorem 1]{KaLi72} and \cite[Theorem 9.5]{PeWe}, resp.) in a
unified way that is somewhat different from that of \cite{KaLi72},
\cite{PeWe}, and will be useful for the current paper. Rather than
showing that the current versions indeed follow from \cite[Theorem
1]{KaLi72} and \cite[Theorem 9.5]{PeWe}, it is simpler to prove them
directly, using a different approach, and for completeness the new
proofs appear in Appendix \ref{app:conv}.

Recall that if $q$ is a prime power and $L$
is a finite extension of $\efq$, then an 
$\efq$-{\bf linearized polynomial} over $L$ is a polynomial of the form
$\sum_{i=0}^{\ell} a_i X^{q^i}\in L[X]$ for some $\ell\in
\mathbb{N}$. Recall also that if $K\subseteq L$ are finite fields, and
$V\subseteq L$ is a $K$-vector space, than the annihilator polynomial of $V$,
$A(X):=\prod_{\gamma\in V}(X-\gamma)$, is a $K$-linearized
polynomial.

We begin with a modified version of \cite[Theorem 1]{KaLi72}.
We refer to this theorem as the down-conversion theorem, since it
states that under certain conditions, words of weight $w$ in
$\ebbch{d}$ can be converted to words of weight $w/2^s$ in
$\ebbch{\lceil d/2^s\rceil}$ (see the theorem for a precise
statement). From this point on, $\wh(\cdot)$ will stand for the
Hamming weight. Also, for a function $f$ and a subset $S$ of the
domain of $f$, we let $f(S):=\{f(s)|s\in S\}$. Finally, for two sets
$S_1,S_2$, we write $S_1\smallsetminus S_2:=\{x\in S_1|x\notin S_2\}$
for the set difference.

\begin{theorem}{(The down-conversion theorem, \cite[Theorem
1]{KaLi72})}\label{thm:dconversion} 
For an even integer $2\leq d\leq 2^m-2$ with $\lceil d/2^s 
\rceil$ even, suppose that a polynomial $f\in \pdone$ takes constant 
values on cosets of an $\eftwo$-subspace $V\subseteq \eftwom$ of
dimension $s$. Then there exists a polynomial $g(X)\in\eftwom[X]$ such
that $f(X)=g(A(X))$, where $A(X)$ is the annihilator 
of $V$, and there exists a constant $c\in \eftwom^*$ such
that for 
$$
h(X):=c\cdot g(X)\cdot \prod_{\gamma\in\eftwom \smallsetminus
A(\eftwom)}(X-\gamma),
$$
the evaluation vector
$\boldh:=(h(\beta))_{\beta\in \eftwom}$ is a codeword of
$\ebbch{\lceil d/2^s \rceil}$. Moreover, writing
$\boldf:=(f(\beta))_{\beta\in\eftwom}$, 
we have $\wh(\boldh)=\wh\big(\boldf\big)/2^s$. Finally, if $S\subseteq
\eftwom$ is the support of $\boldf$, then the support of $\boldh$ is
$A(S)$. 
\end{theorem}

For the proof, see Appendix \ref{app:proofdc}. Note that in the
theorem, given the codeword $\boldf$, one can construct the codeword $\boldh$
simply by calculating $A(S)$, without the need to find $g(X)$.

We have the following immediate corollary to Theorem \ref{thm:dconversion}.

\begin{corollary}\label{coro:dconversion}
Maintaining the notations of Theorem \ref{thm:dconversion}, if
$\wh(\boldf)=d$, so that $\boldf$ is a
minimum-weight codeword of $\ebbch{d}$, then $2^s|d$, and
$\wh(\boldh)=d/2^s$, so that $\boldh$ is a minimum-weight codeword of
$\ebbch{d/2^s}$. 
\end{corollary} 

It is also interesting to work in the opposite direction: given a
minimum-weight codeword of $\ebbch{d}$ for some 
(loosely speaking) small $d$, one would like to obtain
minimum-weight codewords in $\ebbch{d\cdot 2^j}$ for some values of
$j$. The following ``up-conversion theorem'' appears in a different
form in \cite[Theorem 9.5]{PeWe}. 
For the theorem, recall (e.g., from
\cite{BK09}) that for an extension $K\subset L$ of finite fields with
$[L:K]=m$, and for a $K$-subspace $V\subseteq L$, the
{\bf image polynomial} $B(X)\in L[X]$ of $V$ is the unique monic
linearized polynomial of degree $|K|^{m-\dim(V)}$ such that
$B(L)=V$. 

\begin{theorem}{(The up-conversion theorem, \cite[Theorem
9.5]{PeWe})}\label{thm:uconversion}
For even $d$ and for $f\in\pdone$, suppose that the support of the
evaluation vector $\boldf:=(f(x))_{x\in\eftwom}\in 
\ebbch{d}$ is contained in
an $\eftwo$-subspace $U$. Let $g(X)\in \eftwom[X]$ be the interpolation
polynomial for the function $U\to \eftwom$, $x\mapsto
f(x)$, and let $B(X)$ be the image polynomial of $U$. Then, 
$\bs{g\circ B}\in \ebbch{d\cdot\frac{2^m}{|U|}}$, with weight
$\wh(\boldf) \cdot \frac{2^m}{|U|}$. In particular, if
$\wh(\boldf)=d$, then $\bs{g\circ B}$ is a minimum-weight codeword
of $\ebbch{d\cdot\frac{2^m}{|U|}}$. Also, if $S$ is the support of
$\boldf$, then the support of $\bs{g\circ B}$ 
is $B^{-1}(S)$.  
\end{theorem}

For the proof, see Appendix \ref{app:proofuc}.

\begin{remark}
{\rm
\begin{enumerate}

\item The up-conversion theorem is non-trivial when $\dim(U)$ is strictly
smaller than $m$. In particular, smaller values of $\dim(U)$ enable to
construct more values of the final weight.

\item Note that the down-conversion and up-conversion theorems define
inverse functions in the following sense. If we start with the support
$S$ of the codeword $\bs{f}$ of Theorem \ref{thm:dconversion}, which
is the union of cosets of the subspace $V$, we obtain by
the theorem a codeword with support $A(S)$, contained in the subspace
$U:=A(\eftwom)$, with image polynomial $B(X)=A(X)$. By Theorem
\ref{thm:uconversion}, we then obtain a codeword with support
$B^{-1}(A(S))=A^{-1}(A(S))=S$, where the last equality follows since
$S$ is a union of cosets of $V=\ker(A)$. A similar argument shows that
first ``going up'' and then ``going down'' does not change the
support.
\end{enumerate}
}
\end{remark}

\section{Minimum-weight words}\label{sec:main}
In this section, we present a method to construct minimum-weight words
that is based on converting the problem to the question of solving
a certain system of equations. First, the system of equations is derived in
Section \ref{sec:syst}. Then, explicit and probabilistic solutions are
given for certain parameters in Section \ref{sec:sol}. Finally, in
Section \ref{sec:usingdct}, we show how the down-conversion theorem
can be used to write the support explicitly, given the above
solutions. 

\subsection{The system of equations: $\ebbch{2^{m-1}-2^{m-1-i}}\cap
\RM(2,m)$}\label{sec:syst}
The down-conversion theorem (Theorem \ref{thm:dconversion}) states 
explicitly how, given the support of a codeword $\boldf\in\ebbchd$
that satisfies the conditions of the theorem, one can obtain the
support of a codeword $\boldh\in\ebbch{\lceil d/2^s \rceil}$ of weight
$\wh(\boldf)/2^s$. 

To prove the existence of words of weight $2^{m-s-1}-2^{m-s-1-i}$ in
$\ebbch{2^{m-s-1}-2^{m-s-1-i}}$ for certain values of $i$ and $s$, 
Kasami and Lin \cite{KaLi72} used the 
down-conversion theorem, together with a non-constructive existence result
of Berlekamp \cite{Be70}, which states that there exists a word of
minimum weight $2^{m-1}-2^{m-1-i}$ in $\ebbch{2^{m-1}-2^{m-1-i}}\cap
\RM(2,m)$, and this word satisfies the 
conditions of the down-conversion theorem.   

We continue in a similar fashion, replacing Berlekamp's
existence result by an algorithm
for specifying the support of a codeword of weight
$2^{m-1}-2^{m-1-i}$ in
$\ebbch{2^{m-1}-2^{m-1-i}}\cap \RM(2,m)$. Toward this end, we will show
that the problem is equivalent to finding a point $\bs{b}=(b_1, \ldots,
b_{2i})\in\eftwom^{2i}$ in the solution set of certain multivariate
polynomial equations.  

Recall from Section \ref{sec:preliminaries} that for a fixed basis
$B=\{b_1,\ldots, b_m\}$ of $\eftwom/\eftwo$, we have a bijection  
$$
\{\text{functions }\eftwom\to
\eftwo\}\mathrel{\mathop{\rightleftarrows}^{\vct{(\cdot)}{B}}_{\fld{(\cdot)}{B}}}
\{\text{functions }\eftwo^m\to \eftwo\} 
$$
and that if $\{b'_1,\ldots,b'_m\}$ is the dual basis of $B$, then
$\vct{\big(x\mapsto \tr(b'_jx)\big)}{B}$ is the $j$-th projection,
$(x_1,\ldots,x_m)\mapsto x_j$.  

By a simple induction \cite[Lemma 15.2.6]{MacSl},
\cite[p.~441]{MacSl}, the evaluation vector of the polynomial 
$X_1X_2+X_3X_4+\cdots+X_{2i-1}X_{2i}$ (for $i\in\{1,\ldots, \lfloor
m/2\rfloor\}$) has weight $2^{m-1}-2^{m-1-i}$.
This fact will be used in the following theorem.\footnote{For
completeness, here is a proof that does not use induction. Clearly the
polynomial $X_1X_2+X_3X_4+\cdots+X_{2i-1}X_{2i}$ evaluates to $1$ if
and only if an odd number of the monomials $X_1X_2,
\ldots,X_{2i-1}X_{2i}$ evaluate to $1$ and the remaining variables
$X_j$ for $j=2i+1,\ldots, m$ can assume any possible
values. Therefore, the weight equals  
\begin{multline*}
\sum_{\substack{j=1\\j\text{odd}}}^{i}\binom{i}{j}3^{i-j} \cdot
2^{m-2i}=
\\2^{m-2i} \frac{\sum_{j=0}^i\binom{i}{j} 3^{i-j} -
\sum_{j=0}^i\binom{i}{j}
(-1)^j3^{i-j}}{2}=
\\2^{m-2i-1}(4^i-2^i)=2^{m-1}-2^{m-1-i}. 
\end{multline*}
}

\begin{theorem}\label{thm:basis}
Let $B=\{b_1,\ldots,b_m\}$  and $B'=\{b'_1,\ldots,b'_m\}$ be a basis
of $\eftwom/\eftwo$ and its dual basis, respectively. Let $i\in
\{2,\ldots,\lfloor m/2\rfloor\}$, and let $f\in\eftwom[X]$
be a polynomial such that $\vct{f}{B'}$ is $(x_1,\ldots,x_m)\mapsto
x_1x_2+x_3x_4+\cdots+x_{2i-1}x_{2i}$. Then the evaluation vector
$\bs{f}$ is in $\ebbch{2^{m-1}-2^{m-1-i}}$ iff the following $i-1$
equations hold:
\begin{equation}\label{eq:main}
\sum_{\substack{j=1\\j\text{
odd}}}^{2i-1}\Big(b_j^{2^{\ell}}b_{j+1}+b_jb_{j+1}^{2^{\ell}}\Big)=0,\quad
\ell=1,2,\ldots,i-1. 
\end{equation}
By construction, $\bs{f}$ takes constant values on cosets of the space $V$
spanned by $b'_{2i+1},\ldots,b'_m$.   
\end{theorem}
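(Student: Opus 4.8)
The plan is to translate the membership condition $\boldsymbol{f}\in\ebbch{2^{m-1}-2^{m-1-i}}$ into a degree condition on the lowest-degree representative of $f$ modulo $X^{2^m}+X$, and then to extract the coefficients that must vanish. By definition, $\boldsymbol{f}\in\ebbch{2^{m-1}-2^{m-1-i}}$ exactly when $f$ (reduced mod $X^{2^m}+X$) lies in $\pdvar{d-1}$ with $d=2^{m-1}-2^{m-1-i}$, i.e.\ when $f(x)\in\eftwo$ for all $x\in\eftwom$ (which holds here since $\vct{f}{B'}$ is $\eftwo$-valued by construction) \emph{and} $\deg(f)\leq n-(d-1)=2^m-1-(2^{m-1}-2^{m-1-i})+1=2^{m-1}+2^{m-1-i}$. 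So the whole statement reduces to computing the reduced $f$ and asking when its degree does not exceed $2^{m-1}+2^{m-1-i}$.

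First I would write $f$ explicitly. Since $\vct{(\,x\mapsto\tr(b'_jx)\,)}{B'}$ is the $j$-th projection (using $B'$ and its dual $B$, as recalled just before the theorem), the polynomial realizing $x_1x_2+\cdots+x_{2i-1}x_{2i}$ after applying $\vct{(\cdot)}{B'}$ is
\begin{equation*}
f(X)=\sum_{\substack{j=1\\ j\text{ odd}}}^{2i-1} T(b_jX)\,T(b_{j+1}X),
\end{equation*}
where $T(X)=X+X^2+\cdots+X^{2^{m-1}}$. Expanding each product $T(b_jX)T(b_{j+1}X)=\sum_{k,\ell=0}^{m-1} b_j^{2^k}b_{j+1}^{2^\ell}X^{2^k+2^\ell}$ gives a linear combination of monomials $X^{2^k+2^\ell}$. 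The exponents $2^k+2^\ell$ with $k\neq\ell$ already satisfy $0\le 2^k+2^\ell\le 2^{m-1}+2^{m-2}$, which is comfortably below the threshold $2^{m-1}+2^{m-1-i}$ only for small values; the genuinely dangerous terms are the diagonal ones $k=\ell$, giving $X^{2^{k+1}}$, together with the large off-diagonal terms. The key reduction step is that $X^{2\cdot 2^{m-1}}=X^{2^m}\equiv X \pmod{X^{2^m}+X}$, so the exponent $2^{k}+2^{\ell}$ must be reduced modulo $2^m-1$ in the exponent (carrying $2^m\mapsto 1$) before comparing with the degree bound.

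The heart of the argument is a careful bookkeeping of which reduced monomials can have degree exceeding $2^{m-1}+2^{m-1-i}$, and showing that the offending coefficients are exactly the left-hand sides of \eqref{eq:main}. I expect the main obstacle to be precisely this bookkeeping: after reduction mod $X^{2^m}+X$, one must identify, for each $\ell\in\{1,\ldots,i-1\}$, the unique high-degree exponent whose coefficient is $\sum_{j\text{ odd}}\bigl(b_j^{2^\ell}b_{j+1}+b_jb_{j+1}^{2^\ell}\bigr)$, and verify that all other monomials automatically stay within the degree bound. The symmetric pairing $b_j^{2^k}b_{j+1}^{2^\ell}+b_j^{2^\ell}b_{j+1}^{2^k}$ arising from the two orderings within each product is what produces the symmetric form $b_j^{2^\ell}b_{j+1}+b_jb_{j+1}^{2^\ell}$ after raising to a suitable Frobenius power (equivalently, all coefficient identities are considered up to the field automorphism $x\mapsto x^2$, which fixes membership since squaring permutes $\eftwom$). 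Once it is shown that the degree of the reduced $f$ drops below $2^{m-1}+2^{m-1-i}$ precisely when these $i-1$ sums vanish, the ``iff'' follows. The final sentence of the theorem, that $\boldsymbol{f}$ is constant on cosets of $V=\linspan_{\eftwo}(b'_{2i+1},\ldots,b'_m)$, is immediate from the construction: $\vct{f}{B'}$ does not depend on the coordinates $x_{2i+1},\ldots,x_m$, and these coordinates are exactly the ones dual to $b'_{2i+1},\ldots,b'_m$, so translating the argument by any element of $V$ leaves the value unchanged.
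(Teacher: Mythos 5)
You follow the same route as the paper: translate membership in $\ebbch{2^{m-1}-2^{m-1-i}}$ into the degree bound $2^{m-1}+2^{m-1-i}$ on the representative of $f$ modulo $X^{2^m}+X$, write $f\equiv\sum_{j\text{ odd}}T(b_jX)T(b_{j+1}X)$, expand into monomials $X^{2^k+2^{k'}}$, and demand that the coefficients of the too-high monomials vanish; your treatment of the final coset-constancy claim is also correct. (Minor slip: since the dual basis of $B'$ is $B$, the $j$-th projection is $\vct{(x\mapsto\tr(b_jx))}{B'}$, not $\vct{(x\mapsto\tr(b'_jx))}{B'}$; the formula you then write for $f$ is nevertheless the right one.) The genuine gap is that the step which constitutes the paper's entire proof is announced rather than performed: you state that ``one must identify, for each $\ell$, the unique high-degree exponent \ldots{} and verify that all other monomials automatically stay within the degree bound,'' and defer this as the expected ``main obstacle.'' Without that verification the ``iff'' is unproved. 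In addition, your provisional classification of the dangerous terms is wrong: the diagonal terms $k=k'$, which you call ``genuinely dangerous,'' are all harmless --- for $k\le m-2$ the degree $2^{k+1}\le 2^{m-1}$ is already below the bound, and the single term $k=k'=m-1$ reduces via $X^{2^m}\equiv X$ to degree $1$.

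Here is the bookkeeping that closes the gap (and confirms your guess). Since the largest off-diagonal exponent is $2^{m-1}+2^{m-2}<2^m$, the only monomial requiring reduction is the diagonal $X^{2^m}$. An off-diagonal exponent $2^k+2^{k'}$ with $k>k'$ exceeds $2^{m-1}+2^{m-1-i}$ iff $k=m-1$ and $k'\ge m-i$, so by uniqueness of binary representations the offending exponents are exactly $2^{m-1}+2^{m-1-\ell}$ for $\ell=1,\ldots,i-1$, and the coefficient of $X^{2^{m-1}+2^{m-1-\ell}}$ in the reduced polynomial is
\[
\sum_{\substack{j=1\\ j\text{ odd}}}^{2i-1}\Big(b_j^{2^{m-1}}b_{j+1}^{2^{m-1-\ell}}+b_j^{2^{m-1-\ell}}b_{j+1}^{2^{m-1}}\Big),
\]
no other $(k,k')$ pair contributing to this exponent. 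Hence $\bs{f}$ is in the code iff these $i-1$ sums vanish, and since the Frobenius $\sigma$ is an automorphism, vanishing of a sum is equivalent to vanishing of its image under $\sigma^{-(m-1-\ell)}$, which turns the $\ell$-th sum into the left-hand side of (\ref{eq:main}) --- this is exactly your ``up to $x\mapsto x^2$'' remark made precise. With these points supplied, your proposal coincides with the paper's proof.
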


Note that the equations (\ref{eq:main}) involve only $b_1,\ldots,b_{2i}$.

\begin{proof}
By definition, $f\equiv f_0(X):=T(b_1 X)T(b_2 X)+\cdots+T(b_{2i-1}
X)T(b_{2i} X)\mod (X^{2^m}+X)$. Similarly to the proof of Proposition
\ref{prop:subcode},
$\bs{f}\in\ebbch{2^{m-1}-2^{m-1-i}}$ iff $f_0 \bmod (X^{2^m}+X)$ 
has a degree of at most
$$
\partial_0:=2^{m}-1-(2^{m-1}-2^{m-1-i}-1)=2^{m-1}+2^{m-1-i}. 
$$

Consider a single summand $u_j:=T(b_jX)T(b_{j+1}X)$ in the definition of
$f_0$. After reduction modulo $X^{2^m}+X$, we have
\begin{eqnarray}
u_j \equiv & 
 \big(b_j^{2^{m-1}}b_{j+1}^{2^{m-2}} + b_j^{2^{m-2}}b_{j+1}^{2^{m-1}}\big)
X^{2^{m-1}+2^{m-2}} + \nonumber\\
 &\big(b_j^{2^{m-1}}b_{j+1}^{2^{m-3}} + b_j^{2^{m-3}}b_{j+1}^{2^{m-1}}\big)
X^{2^{m-1}+2^{m-3}} + \nonumber\\
 & \qquad \qquad \cdots   \nonumber\\
 &\big(b_j^{2^{m-1}}b_{j+1}^{2^{m-i}} + b_j^{2^{m-i}}b_{j+1}^{2^{m-1}}\big)
X^{2^{m-1}+2^{m-i}} + \nonumber\\
 & \text{terms of degree }\leq 2^{m-1}+2^{m-i-1}=\partial_0. \label{eq:tbj}
\end{eqnarray}

Clearly, the degree of $f_0(X)\bmod (X^{2^m}+X)$ is at most
$\partial_0$ iff the coefficient of $X^k$ is zero for all $k>\partial_0$,
that is (using (\ref{eq:tbj})), iff
\begin{eqnarray*}
\sum_{\substack{j=1\\j\text{
odd}}}^{2i-1}\big(b_j^{2^{m-1}}b_{j+1}^{2^{m-2}} +
b_j^{2^{m-2}}b_{j+1}^{2^{m-1}}\big) & = & 0\\
\sum_{\substack{j=1\\j\text{
odd}}}^{2i-1}\big(b_j^{2^{m-1}}b_{j+1}^{2^{m-3}} +
b_j^{2^{m-3}}b_{j+1}^{2^{m-1}}\big) & = & 0\\
 & \vdots & \\
\sum_{\substack{j=1\\j\text{
odd}}}^{2i-1}\big(b_j^{2^{m-1}}b_{j+1}^{2^{m-i}} +
b_j^{2^{m-i}}b_{j+1}^{2^{m-1}}\big) & = & 0.
\end{eqnarray*}

Recall that $\sigma\colon x\mapsto x^2$ is the Frobenius
automorphism, and so $\sigma^{-1}$ is just the square root
function. Applying $\sigma^{-(m-2)}$ to the first equation,
$\sigma^{-(m-3)}$ to the second equation, etc., we get
(\ref{eq:main}). 
\end{proof}

\begin{remark}\label{rem:alls}
{\rm
\begin{enumerate}

\item If a solution with $\eftwo$-independent coordinates is found for the 
system (\ref{eq:main}), then it can be arbitrarily completed to a
basis $B$, and this determines the dual basis $B'$. Hence, the main
question is to find a solution with independent coordinates.

\item The evaluation vector $\bs{f}$ in Theorem
\ref{thm:basis} takes constant values on cosets of the space $V$
defined in the theorem, and hence also on cosets of any subspace of
$V$. Since $\dim(V)=m-2i$, the down-conversion theorem (Theorem
\ref{thm:dconversion}) can be used to 
replace $\bs{f}$ by a word of weight $2^{m-1-s}-2^{m-1-i-s}$ for all
integer $s\leq m-2i$.\footnote{Note that $m-2i\leq m-1-i$ for integer
$i>0$, and so for $s$ in the 
above range, the exponent $m-1-i-s$ is assured to be non-negative.} We
discuss this in more detail ahead in Section \ref{sec:usingdct}.

\end{enumerate}
}
\end{remark}

In the following subsection, we will present solutions with
independent coordinates for (\ref{eq:main}) for some values of $m$ and
$i$ (where $m$ is allowed to grow to infinity), and these solutions
do not rely on \cite{Be70}. However, the question arises whether such
solutions exist for the values of $m,i$ that are
\emph{not} covered ahead. Theorem \ref{thm:converse} below, together
with the existence result of Berlekamp 
\cite{Be70}, assure that such a solution in fact exists for all
$m\geq 2$ and $0\leq i\leq \lfloor m/2\rfloor$. 

Before stating Theorem \ref{thm:converse}, we recall the relevant part
of a theorem of Dickson \cite[Theorem 15.2.4]{MacSl}.

\begin{theorem}{(Dickson's Theorem, \cite[Theorem
15.2.4]{MacSl})}\label{thm:dickson} 
Let $g\colon\eftwom\to\eftwo$ be defined by
$g(\bs{x})=Q(\bs{x})+L(\bs{x})+\veps$, where $Q\colon\eftwo^m \to
\eftwo$ is a quadratic form, 
$L\colon\eftwo^m \to \eftwo$ is a linear functional, and $\veps\in
\eftwo$. Then there exists an invertible matrix $R\in\eftwo^{m\times
m}$ such that for $h\colon \eftwo^m\to\eftwo$ defined by
$h(\bs{y}):=g(\bs{y}R)$ we have 
$$
h(\bs{y})=\sum_{j=1}^{\ell}y_{2j-1}y_{2j}+L_1(\bs{y})+\veps
$$
for some $\ell\in \bbNp$ and some linear functional
$L_1\colon\eftwom\to\eftwo$.  
\end{theorem}

The following theorem considers a certain basis $D$ for
$\eftwom/\eftwo$, and concludes the existence of another basis, $E$,
that may take the role of the dual basis $B'$ of Theorem
\ref{thm:basis}.
\begin{theorem}\label{thm:converse}
If for $g\in \eftwo[X_1,\ldots,X_m]_{\leq 2}$ there exists a basis
$D:=\{d_1,\ldots,d_m\}$ of $\eftwom/\eftwo$ such that the evaluation vector of
$\fld{g}{D}\colon \eftwom\to \eftwo$ is in $\ebbch{2^{m-1}-2^{m-1-i}}$
and has weight $2^{m-1}-2^{m-1-i}$, then there
exists another basis $E:=\{e_1,\ldots,e_m\}$ such that for
$h\colon\eftwo^m\to \eftwo$ defined by
$h(x_1,\ldots,x_m)=\colon x_1x_2+x_3x_4+\cdots+x_{2i-1}x_{2i}$, the
evaluation vector of $\fld{h}{E}$ is in $\ebbch{2^{m-1}-2^{m-1-i}}$, and
the first $2i$ elements in the dual basis $E'$ of $E$ satisfy
(\ref{eq:main}).  
\end{theorem}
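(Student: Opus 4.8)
The plan is to reduce the statement to producing a single basis $E$, and to obtain that basis by turning an affine equivalence between $g$ and the canonical quadratic $h(\bs{y})=y_1y_2+y_3y_4+\cdots+y_{2i-1}y_{2i}$ into a basis change followed by a translation of $\eftwom$. First I would observe that the two conclusions are in fact equivalent: applying Theorem \ref{thm:basis} with the identification $B'=E$ and $B=E'$ (which is legitimate because trace-duality is an involution, so $(E')'=E$), the membership $\bs{\fld{h}{E}}\in\ebbch{2^{m-1}-2^{m-1-i}}$ holds if and only if $e'_1,\ldots,e'_{2i}$ satisfy (\ref{eq:main}). Hence it suffices to exhibit a basis $E$ for which $\bs{\fld{h}{E}}$ lies in the code; the equations (\ref{eq:main}) then come for free.

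Second, I would extract the affine normal form of $g$. Writing $g=Q+L+\veps$ and applying Dickson's Theorem (Theorem \ref{thm:dickson}), there is an invertible $R$ with $g(\bs{y}R)=\sum_{j=1}^{\ell}y_{2j-1}y_{2j}+L_1(\bs{y})+\veps$. An invertible $\eftwo$-linear substitution of variables is a bijection of $\eftwo^m$, hence preserves the weight of the evaluation vector; likewise the weight of $\bs{\fld{g}{D}}$ equals the weight of $g$ as a Boolean function, since $\bs{x}\mapsto\sum_k x_k d_k$ is a bijection. Thus $\sum_{j=1}^{\ell}y_{2j-1}y_{2j}+L_1+\veps$ has weight $2^{m-1}-2^{m-1-i}$. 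I would then analyze $L_1$: if it involves any variable $y_k$ with $k>2\ell$, that variable occurs only linearly, the function is balanced (weight $2^{m-1}$), which is impossible here. Otherwise $L_1$ is supported on $y_1,\ldots,y_{2\ell}$ and can be removed by completing the product, using $y_{2j-1}y_{2j}+a_{2j-1}y_{2j-1}+a_{2j}y_{2j}=(y_{2j-1}+a_{2j})(y_{2j}+a_{2j-1})+a_{2j-1}a_{2j}$, i.e.\ by a translation $\bs{y}\mapsto\bs{y}+\bs{t}$, leaving $\sum_{j=1}^{\ell}y_{2j-1}y_{2j}+\veps'$. Its weight is $2^{m-1}-2^{m-1-\ell}$ if $\veps'=0$ and $2^{m-1}+2^{m-1-\ell}$ if $\veps'=1$; matching $2^{m-1}-2^{m-1-i}<2^{m-1}$ forces $\veps'=0$ and $\ell=i$. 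Composing, $h(\bs{y})=g\big((\bs{y}+\bs{t})R\big)=g(\bs{y}N+\bs{c})$ with $N:=R$ and $\bs{c}:=\bs{t}R$.

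Third, I would absorb $N$ and $\bs{c}$. Define the new basis by $e_k:=\sum_j N_{kj}d_j$, which is a basis since $N$ is invertible. A direct computation shows that if $\beta=\sum_k y_k e_k$, then $\beta$ has $D$-coordinates $\bs{y}N$, whence $\fld{h}{E}(\beta)=h(\bs{y})=g(\bs{y}N+\bs{c})=g(\phi_D^{-1}(\beta)+\bs{c})=\fld{g}{D}(\beta+\gamma)$, where $\phi_D$ denotes the coordinate isomorphism $\bs{x}\mapsto\sum_k x_k d_k$ for $D$ and $\gamma:=\sum_k c_k d_k$. Thus $\bs{\fld{h}{E}}$ is obtained from $\bs{\fld{g}{D}}$ by the translation $\beta\mapsto\beta+\gamma$, which belongs to the automorphism group of the extended BCH code (the translation group $x\mapsto x+\gamma$ is an automorphism, as recalled in Section \ref{sec:preliminaries}). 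Since $\bs{\fld{g}{D}}$ lies in the code by hypothesis, so does $\bs{\fld{h}{E}}$; by the first step, $e'_1,\ldots,e'_{2i}$ satisfy (\ref{eq:main}), completing the argument.

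Finally, the main obstacle is the weight bookkeeping of the second step: pinning down $\ell=i$ and ruling out both the balanced case and the complemented ($\veps'=1$) case, together with the care needed to keep the reduction \emph{affine} rather than merely linear, so that exactly one translation (which is a code automorphism) survives after the basis change. Everything else is routine linear algebra.
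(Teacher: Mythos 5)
Your argument is correct, and it shares the paper's overall skeleton: Dickson's Theorem (Theorem \ref{thm:dickson}) produces the matrix $R$, the new basis $E$ is obtained by letting $R$ act on $D$ exactly as in the paper, and Theorem \ref{thm:basis} (with the involution $(E')'=E$, which you rightly justify) converts code membership of $\bs{\fld{h}{E}}$ into the equations (\ref{eq:main}) for $E'$. The genuinely different part is the middle step, i.e., how the affine residue $L_1+\veps$ is disposed of. The paper never removes it: writing $h_0(\bs{y}):=g(\bs{y}R)=\sum_{j=1}^{\ell}y_{2j-1}y_{2j}+L_1(\bs{y})+\veps$, it observes that with this same basis $E$ one has $\fld{h_0}{E}=\fld{g}{D}$ as functions (so membership of that vector is immediate), obtains $\ell=i$ by citing \cite[Theorem 15.2.5]{MacSl}, and then discards $L_1+\veps$ by \emph{linearity of the code}: the evaluation vector of $\fld{(L_1+\veps)}{E}$ is itself a codeword of $\ebbch{2^{m-1}-2^{m-1-i}}$ (affine functionals are evaluations of polynomials of degree at most $2^{m-1}$), so subtracting it leaves the evaluation vector of the pure quadratic in the code. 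You instead eliminate $L_1$ inside $\eftwo^m$ by completing the product, which after the change of basis survives as the single field translation $\beta\mapsto\beta+\gamma$, and you invoke \emph{affine invariance} of the extended code rather than its linearity. Your route requires two extra (correct) observations that the paper's avoids --- that $L_1$ cannot involve the variables $y_k$ with $k>2\ell$ (the balancedness argument), and the bookkeeping forcing $\veps'=0$ and $\ell=i$ --- but in exchange it is more self-contained, since $\ell=i$ falls out of the known weight $2^{m-1}-2^{m-1-\ell}$ of the canonical quadratic instead of an appeal to \cite[Theorem 15.2.5]{MacSl}, and it yields slightly sharper information: $\bs{\fld{h}{E}}$ is the image of $\bs{\fld{g}{D}}$ under a coordinate permutation of the code, not merely equal to it up to addition of an affine codeword.
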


\begin{proof}
Take a quadratic form $Q\colon\eftwo^m \to
\eftwo$, a functional $L\colon\eftwo^m\to \eftwo$ and $\veps\in
\eftwo$ such that for $g\colon \bs{x}\mapsto
Q(\bs{x})+L(\bs{x})+\veps$ and for some basis $D=\{d_1,\ldots,d_m\}$ of
$\eftwom/\eftwo$, the evaluation vector of $\fld{g}{D}$ is in
$\ebbch{2^{m-1}-2^{m-1-i}}$. By Theorem \ref{thm:dickson}, there
exists an invertible matrix $R\in\eftwo^{m\times 
m}$ such that for $h\colon \eftwo^m\to\eftwo$ defined by $h(\bs{y}):=g(\bs{y}R)$ we have 
$$
h(\bs{y})=\sum_{j=1}^{\ell}y_{2j-1}y_{2j}+L_1(\bs{y})+\veps
$$
for some $\ell\in \bbNp$ and some linear functional
$L_1\colon\eftwom\to\eftwo$. 

Hence, for the basis (written as a vector)
$\bs{e}=(e_1,\ldots,e_m):=(d_1,\ldots,d_m)R^t$ of $\eftwom/\eftwo$,
we get on one hand, for $E:=\{e_1,\ldots,e_m\}$,
$$
\fld{h}{E}(y_1e_1+\cdots+y_me_m)=\sum_{j=1}^{\ell}y_{2j-1}y_{2j}+L_1(\bs{y})+\veps,
$$
(by the definition of $\fld{h}{E}$), and, on the other hand, we claim
that $\fld{h}{E}(x)=\fld{g}{D}(x)$ for all $x\in 
\eftwom$. To see this, write $\bs{d}:=(d_1,\ldots,d_m)$, and
$x=\bs{e}\bs{x}^t=\bs{d}R^t\bs{x}^t=\bs{d}\big(\bs{x}R\big)^t$,
to get $\fld{g}{D}(x)=g(\bs{x}R)=h(\bs{x})=\fld{h}{E}(x)$.
So, because the evaluation vector of $\fld{g}{D}$ is in
$\ebbch{2^{m-1}-2^{m-1-i}}$, so is the evaluation vector of
$\fld{h}{E}$ (this is the same vector). Also, if the weight of this
vector is $2^{m-1}-2^{m-1-i}$, it follows from \cite[Theorem
15.2.5]{MacSl} that $\ell=i$.  

It is easily verified that the evaluation vector of
$\fld{(L_1+\veps)}{E}$ is in $\ebbch{2^{m-1}-2^{m-i-1}}$, and, since
the evaluation vector of $\fld{h}{E}$ is also in
$\ebbch{2^{m-1}-2^{m-i-1}}$, it follows that so is the evaluation
vector of $\fld{(\bs{y}\mapsto
\sum_{j=1}^{i}y_{2j-1}y_{2j})}{E}$. It now follows from Theorem
\ref{thm:basis} that the first $2i$ elements of the dual basis of $E$
must satisfy (\ref{eq:main}).  
\end{proof}

\subsection{Solutions to the system of equations (\ref{eq:main})}\label{sec:sol}
For $j\in \bbNp$, let 
$$
f_j(X_1,X_2) := X_1^{2^j}X_2+X_1X_2^{2^j}\in \eftwo[X_1,X_2]. 
$$ 
Note that $f_j$ is a homogeneous polynomial of degree $2^j+1$.

\subsubsection{The case $i=2$}\label{sec:ci2}
For $i=2$, we are looking for a point $\bs{b}=(b_1,b_2,b_3,b_4)$ with
$f_1(b_1,b_2)=f_1(b_3,b_4)$, 
that is, with  
\begin{equation}\label{eq:i2}
b_1^2b_2+b_1b_2^2 = b_3^2b_4+b_3b_4^2, 
\end{equation}
such that $b_1,b_2,b_3,b_4$ are $\eftwo$-linearly independent.

\hfill\\ \noindent {\bf $\bs{m}$ even}\hfill \\
In this case, $\ef_4\subseteq \eftwom$. 
\begin{proposition}\label{prop:i2}
Suppose that $m\geq 4$ is even. Let $c$ be a primitive element of
$\ef_4^*$. If $x,y\in \eftwom$ are linearly independent over $\ef_4$,
then $\bs{b}:=(x,y,cx,cy)$ has $\eftwo$-independent 
coordinates. Moreover, for all odd $j$ and all $x,y$, $\bs{b}$
satisfies  
\begin{equation}\label{eq:oddj}
f_j(b_1,b_2)=f_j(b_3,b_4).
\end{equation} 
In particular, (\ref{eq:i2}) holds.
\end{proposition}

\begin{proof}
If $a_1x+a_2y+a_3cx+a_4cy=0$ for some
$a_1,\ldots,a_4\in \eftwo$, then $x(a_1+a_3c)+y(a_2+a_4c)=0$, and the
linear independence of $x,y$ over $\ef_4$, as well as the linear
independence of $1,c$ over $\eftwo$, imply that $a_1=\cdots=a_4=0$.

Let $j\in \bbNp$ be odd. Note that for odd $j$, $2^j+1\equiv 0\mod (3)$,
and therefore 
$$
f_j(cx,cy)=c^{2^j+1}f_j(x,y)=f_j(x,y),
$$
so that $\bs{b}$ satisfies (\ref{eq:oddj}).
\end{proof}

Note that in Proposition \ref{prop:i2}, it is possible to choose,
e.g., $(x,y)=(1,\alpha)$, where $\alpha\in
\eftwom^*$ is primitive.\hfill\\

\noindent {\bf $\bs{m}$ odd}\hfill \\
When $m$ is odd, $2^m-1\not\equiv 0 \mod(3)$, so that
$\gcd(3,2^m-1)=1$, and $3$ has a 
multiplicative inverse in $\bbZ/(2^m-1)\bbZ$. Hence, a unique cubic
root exists for all $z\in \eftwom$.

\begin{proposition}\label{prop:i2odd}
Suppose that $m\geq 5$ is odd. Fix distinct $v,v'\in\eftwom^*$ 
and let 
\begin{multline*}
\bs{b} = (b_1,b_2,b_3,b_4) =\\ \Big(\sqrt[3]{\frac{v^2}{c+v}},
\sqrt[3]{\frac{(c+v)^2}{v}}, \sqrt[3]{\frac{v'^2}{c+v'}},
\sqrt[3]{\frac{(c+v')^2}{v'}}\Big), 
\end{multline*}
where $c$ is picked uniformly at random from $ \eftwom\smallsetminus\{v,v',0\}$. Then, 
$\bs{b}$ is a solution for (\ref{eq:i2}), and with probability at least
$1-O(2^{-m})$ its entries are  $\eftwo$-linearly independent. 
\end{proposition}

\begin{proof}
It can be verified that $b_1^2b_2=v$, $b_1b_2^2=c+v$, and similarly
for $b_3,b_4$ with $v'$ instead of $v$. Hence, both sides of
(\ref{eq:i2}) equal $c$. We proceed to show the claim about the linear
independence.  

Assume that there exist $a_i\in\eftwo$ not all zero, such that
$\sum_ia_ib_i=0$. Equivalently, 
\begin{eqnarray*}
a_1\sqrt[3]{\frac{v^2}{c+v}}+a_2\sqrt[3]{\frac{(c+v)^2}{v}}
=\\
a_3\sqrt[3]{\frac{v'^2}{c+v'}} + a_4\sqrt[3]{\frac{(c+v')^2}{v'}}.
\end{eqnarray*}
Dividing by   $\sqrt[3]{\frac{v^2}{c+v}\frac{v'^2}{c+v'}}$  we obtain
\begin{eqnarray*}
 \Big(a_1+a_2\frac{c+v}{v}\Big)\sqrt[3]{\frac{c+v'}{v'^2}} = 
\Big(a_3+a_4\frac{c+v'}{v'}\Big)\sqrt[3]{\frac{c+v}{v^2}}. 
\end{eqnarray*}
Raising to the power of $3$ yields 
\begin{eqnarray*}
 \Big(a_1+a_2\frac{c+v}{v}\Big)^3\frac{c+v'}{v'^2} & = &
\Big(a_3+a_4\frac{c+v'}{v'}\Big)^3\frac{c+v}{v^2}. 
\end{eqnarray*}
This is a polynomial of degree at most $4$ in the variable $c$, where
the coefficient of $c^4$ and the constant term are
$$
\frac{a_2}{v^3v'^2} +
\frac{a_4}{v^2v'^3},  
\frac{a_1+a_2}{v'}+\frac{a_3+a_4}{v},
$$ 
respectively (note that, e.g.,
$(a_1+a_2)^3=(a_1+a_2)$, as the $a_i$ are
binary). They are both zero together iff $a_i=0$ for all $i$, as
$v\neq v'$, and $v,v'\neq 0$.  
We conclude that in either of the $2^4-1$ choices of the $a_i$'s,
the resulting polynomial is nonzero of degree at most $4$, and
therefore there are at most $4$ possible ``bad'' choices for $c$ for
which the $b_i$'s are $\eftwo$-linearly dependent. We note that this
analysis gives a positive probability for odd $m\geq 7$. For $m=5$, a
more detailed account of the degree of the polynomial in $c$ for the
various choices of $a_1,\ldots a_4$ can be used to show a positive
probability. We omit the details.
\end{proof}

\hfill\\ \noindent {\bf Composite $\bs{m}$}\hfill \\
So far, for $i=2$ we have a deterministic solution for even $m$, and a
probabilistic solution with success probability $1-O(2^{-m})$ for odd
$m$. The following is an additional deterministic construction, which
covers some values of composite $m$, both even and odd. For the odd
values of $m$ covered by Proposition \ref{prop:constz} below, this
provides a deterministic algorithm, instead of the previous
probabilistic algorithm.

\begin{proposition}\label{prop:constz}
Let $\ell,t\geq 2$ be coprime integers. Let $m:=\ell t$.  Let $a\in
\eftwomm{\ell}, b\in \eftwomm{t}$ be such that 
$\eftwomm{\ell}=\eftwo(a)$, $\eftwomm{t}=\eftwo(b)$. Then there
exists $x\in \eftwom$ such that $x^2+x=a^2b+ab^2$, so that 
$\bs{b}=(b_1,b_2,b_3,b_4):=(1,x,a,b)\in \eftwom^4$ satisfies
$b_1^2b_2+b_1b_2^2=b_3^2b_4+b_3b_4^2$. Furthermore, the entries of
$\bs{b}$ are $\eftwo$-linearly independent.
\end{proposition}

\begin{proof}
To show the existence of $x$, it is
sufficient to prove that $\tr(a^2b+ab^2)=0$ (by Hilbert's Theorem
90). To this end, let $k\in \bbN$ be such that $k\equiv 1\mod(\ell)$
and $k\equiv -1 \mod (t)$ (such a $k$ exists by the Chinese Reminder
Theorem), and note that by definition, $\ell|(k-1)$, and
$t|(k+1)$. Hence 
$$
(ab^2)^{2^k}=a^{2^k}b^{2^{k+1}}=a^2b,
$$
so that $ab^2$ and $a^2b$ are Galois conjugates. This implies
$\tr(ab^2)=\tr(a^2b)$, as required.

For linear independence, note first that $\{1,a,b\}$ are clearly
linearly independent, since
$\eftwomm{\ell}\cap\eftwomm{t}=\eftwo$. Hence, if $\{1,x,a,b\}$ are
linearly dependent over $\eftwo$, then $x=\veps_1+\veps_2 a+\veps_3
b$ for some $\veps_i\in \eftwo$, $i\in\{1,2,3\}$. Substituting in
$x^2+x=a^2b+ab^2$, we 
obtain 
$$
\veps_2(a^2+a)+\veps_3(b^2+b)+a^2b+ab^2=0.
$$
Hence, for any choice for $\veps_2,\veps_3$, we obtain that $a$ is a
root of a polynomial of degree $2$ over $\eftwo(b)$ (specifically,
$X^2(b+\veps_2)+X(b^2+\veps_2)+\veps_3(b^2+b)$) and also $b$ is a root
of a polynomial of degree $2$ over $\eftwo(a)$. This implies that both
$[\eftwo(a,b):\eftwo(a)]\leq 2$ and $[\eftwo(a,b):\eftwo(b)]\leq
2$. Considering the following diagram of field extensions,
this contradicts $\max\{\ell,t\}\geq 3$, which follows from the
assumptions. 


$$
\begin{gathered}
\xymatrix{
& \eftwo(a,b)=\eftwom \ar@{-}[ld]_{t}
\ar@{-}[rd]^{\ell}\\
\txt{$\eftwo(a)$\\$=\eftwomm{\ell}$}\ar@{-}[dr]_{\ell} && 
\txt{$\eftwo(b)$\\$=\eftwomm{t}$}\ar@{-}[dl]^{t}\\ 
 & \eftwo(a)\cap\eftwo(b)=\eftwo
}
\end{gathered}
$$

\end{proof}

\subsubsection{The case $i=3$, $m$ even}\label{sec:ci3}
Recall that for $i=3$, Equations (\ref{eq:main}) read
\begin{eqnarray}
b_1^2b_2+b_1b_2^2 + b_3^2b_4+b_3b_4^2 + b_5^2b_6+b_5b_6^2 & = & 0,
\label{eq:eq31}\\ 
b_1^4b_2+b_1b_2^4 + b_3^4b_4+b_3b_4^4 + b_5^4b_6+b_5b_6^4 & = & 0.
\label{eq:eq32} 
\end{eqnarray}

For this case, a probabilistic algorithm for even $m$ is presented in
Proposition \ref{prop:i3evenm} below. 
We will first need the following lemma, similar to Proposition
\ref{prop:i2}. 

\begin{lemma}\label{lemma:x1}
Suppose that $m\geq 6$ is even. Let $\bs{b}:=(1,y,c,c^2y)$, where
$y\notin \ef_4$ and $c$ is a primitive element of $\ef_4$. Then the
coordinates of $\bs{b}$ are $\eftwo$-linearly independent, and 
\begin{equation}
\label{eq:oddj3}
f_j(b_1,b_2)+f_j(b_3,b_4)=
\begin{cases}
0 & j \text { even }\\
c^2y + cy^{2^j} & j \text{ odd.}
\end{cases}
\end{equation}
\end{lemma}

\begin{proof}
We will prove more generally that for all $x,y\in \eftwom$,
$\bs{b}:=(x,y,cx,c^2y)$ satisfies
\begin{multline}
    \label{eq:oddj2}
f_j(b_1,b_2)+f_j(b_3,b_4)=\\
\begin{cases}
0 & j \text { even }\\
(1+c)x^{2^j}y + (1+c^2)xy^{2^j} & j \text{ odd,}
\end{cases}
\end{multline}
and that if $x,y$ are linearly independent over $\ef_4$, then $\bs{b}$
has $\eftwo$-independent coordinates. The lemma will then follow since
for a primitive $c\in \ef_4$, it holds that $c^2=c+1$.

Suppose that $j$ is even. Then 
\begin{multline*}
f_j(b_3,b_4)=(cx)^{2^j}c^2y+ cx(c^2y)^{2^j} =\\
c^{2^j+2}x^{2^j}y+c^{2^{j+1}+1}xy^{2^j} =
x^{2^j}y+xy^{2^j}=f_j(b_1,b_2),  
\end{multline*}
where the third equality follows since for even $j$, 
$2^j+2,2^{j+1}+1\equiv 0\mod (3)$.

Next, suppose that $j$ is odd. Then 
$$
f_j(b_3,b_4)=(cx)^{2^j}c^2y+ cx(c^2y)^{2^j} = cx^{2^j}y + c^2xy^{2^j},
$$ 
where the last equality follows since for odd $j$, 
$2^j+2\equiv 1\mod (3)$ and $2^{j+1}+1\equiv 2\mod (3)$.

Finally, assume that $x,y$ are linearly independent over $\ef_4$, and
that $a_1x+a_2y+a_3cx+a_4c^2y=0$ for some
$a_1,\ldots,a_4\in \eftwo$. Then $x(a_1+a_3c)+y(a_2+a_4c^2)=0$, which,
as in Proposition \ref{prop:i2}, implies that $a_1=\cdots=a_4=0$.  
\end{proof}

We will also need the following slight modification of \cite[Lemma
11]{K19}. In \cite{K19}, $S$ of Lemma \ref{lemma:kopparty}
below can be either a subspace or an arithmetic progression. However,
if only the subspace case is considered, the statement can be
slightly sharpened.  We omit the proof. 

\begin{lemma}{\cite{K19}}\label{lemma:kopparty} 
Let $q$ be a power of a prime $p$, let $S\subseteq \efq$ be an
$\efpi$-subspace (equivalently, an additive subgroup), and let
$H\subseteq \efq^*$ be a multiplicative subgroup. Then $|H\cap
S|/|S|\geq |H|/q-\sqrt{q}/|S|$.
\end{lemma}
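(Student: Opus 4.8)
The plan is to use the standard additive/multiplicative character duality together with the classical magnitude bound for Gauss sums; the Gauss-sum bound is exactly what produces the $\sqrt{q}$ in the statement and what sharpens the subspace case relative to the arithmetic-progression case of \cite{K19}.

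First I would set up the two dual groups. Let $H^{\perp}$ denote the group of multiplicative characters of $\efq^*$ that are trivial on $H$; it has order $(q-1)/|H|$, and for $x\in\efq^*$ one has the indicator formula $\mathbf{1}_H(x)=\frac{|H|}{q-1}\sum_{\chi\in H^{\perp}}\chi(x)$, by orthogonality. Likewise, let $S^{\perp}$ be the group of additive characters of $\efq$ trivial on $S$, of order $q/|S|$, so that $\mathbf{1}_S(x)=\frac{|S|}{q}\sum_{\psi\in S^{\perp}}\psi(x)$. Since $0\notin H$ while $0\in S$, I would write $|H\cap S|=\sum_{x\in S\setminus\{0\}}\mathbf{1}_H(x)$ and expand via $H^{\perp}$, separating the contribution of the trivial character $\chi_0$ (which gives the main term $\frac{|H|}{q-1}(|S|-1)$) from the rest.

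Next I would bound the error term $\frac{|H|}{q-1}\sum_{\chi\neq\chi_0}\sum_{x\in S}\chi(x)$ (using the convention $\chi(0)=0$, so summing over $S$ or $S\setminus\{0\}$ coincide). For each nontrivial $\chi\in H^{\perp}$ I would expand $\sum_{x\in S}\chi(x)=\frac{|S|}{q}\sum_{\psi\in S^{\perp}}G(\chi,\psi)$, where $G(\chi,\psi)=\sum_{x\in\efq}\chi(x)\psi(x)$. The trivial $\psi$ contributes $\sum_{x}\chi(x)=0$ since $\chi$ is nontrivial, while each of the $q/|S|-1$ nontrivial $\psi$ contributes a Gauss sum of absolute value exactly $\sqrt{q}$. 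Applying the triangle inequality over the $(q-1)/|H|-1$ nontrivial $\chi$, and noting that the prefactors $|H|$ and $|S|$ cancel, yields the clean closed-form bound $\frac{(q-1-|H|)(q-|S|)}{q(q-1)}\sqrt{q}$ on the magnitude of the error.

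Finally I would reconcile the two estimates. Combining gives $|H\cap S|\geq \frac{|H|(|S|-1)}{q-1}-\frac{(q-1-|H|)(q-|S|)}{q(q-1)}\sqrt{q}$, after which reaching the target is pure algebra. I expect this last reconciliation to be the only genuine obstacle: the main term $\frac{|H|(|S|-1)}{q-1}$ is in fact slightly \emph{smaller} than the desired $\frac{|H|\,|S|}{q}$, the deficit being $|H|\frac{q-|S|}{q(q-1)}$, so one cannot argue termwise and must instead absorb this deficit into the (favorably signed) Gauss-sum savings. Concretely, I would reduce the claim $|H\cap S|\geq \frac{|H|\,|S|}{q}-\sqrt{q}$ to the inequality $\sqrt{q}\big[(q-1)|S|+|H|(q-|S|)\big]\geq |H|(q-|S|)$, which holds trivially since $\sqrt{q}\geq 1$ and $(q-1)|S|\geq 0$. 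Dividing $|H\cap S|\geq \frac{|H|\,|S|}{q}-\sqrt{q}$ through by $|S|$ then gives the stated bound.
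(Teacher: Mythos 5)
Your proof is correct: the character-sum setup, the Gauss-sum bound, and the final reconciliation inequality $\sqrt{q}\bigl[(q-1)|S|+|H|(q-|S|)\bigr]\geq |H|(q-|S|)$ all check out. There is, however, nothing in the paper to compare it against: the paper explicitly states ``We omit the proof'' and defers to \cite{K19}, of which the stated lemma is a slight sharpening (subspaces only, rather than subspaces or arithmetic progressions). So your argument supplies a proof the paper deliberately left out, and it does so by what is essentially the route underlying \cite{K19}: dual expansions over $H^{\perp}$ and $S^{\perp}$ together with the magnitude bound $|G(\chi,\psi)|=\sqrt{q}$. One structural remark: your ordering of the two expansions (multiplicative characters of $H$ first, then additive characters of $S$) is exactly what creates the awkward last step, since the main term comes out as $\frac{|H|(|S|-1)}{q-1}$ rather than $\frac{|H|\,|S|}{q}$. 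If you instead expand $\mathbf{1}_S$ first, writing $|H\cap S|=\sum_{x\in H}\mathbf{1}_S(x)=\frac{|S|}{q}\sum_{\psi\in S^{\perp}}\sum_{x\in H}\psi(x)$, then the trivial $\psi$ contributes exactly $\frac{|H|\,|S|}{q}$, and each of the $q/|S|-1$ nontrivial $\psi$ contributes at most $\sqrt{q}$ in absolute value (this incomplete-sum bound over $H$ follows by expanding $\mathbf{1}_H$ over $H^{\perp}$ and applying the same Gauss-sum bound, giving a convex combination of $1$ and $\sqrt{q}$). The error is then at most $\frac{|S|}{q}\bigl(\frac{q}{|S|}-1\bigr)\sqrt{q}=\bigl(1-\frac{|S|}{q}\bigr)\sqrt{q}<\sqrt{q}$, so no reconciliation is needed, and one even gets the marginally stronger conclusion $|H\cap S|\geq \frac{|H|\,|S|}{q}-\bigl(1-\frac{|S|}{q}\bigr)\sqrt{q}$. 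Your version stands as written; the reordering simply makes the closing algebra disappear.
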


With Lemma \ref{lemma:x1} and Lemma \ref{lemma:kopparty} , we are in a
position to find a probabilistic solution for (\ref{eq:eq31}),
(\ref{eq:eq32}).  

\begin{proposition}\label{prop:i3evenm}
Suppose that $m\geq 6$ is even, and let $c$ be a primitive element of
$\ef_4\subseteq \eftwom$. 
\begin{enumerate}

\item When $y$ is drawn uniformly
from $\eftwom$, the probability that
$c^2y+cy^2$ has a non-zero third root is at least 
$$
p:=\frac{1}{3}-\frac{2}{\sqrt{q}}-\frac{1}{3q},
$$   
where $q:=2^m$.

\item For all $y\notin \ef_4$ for which a third root exists, say,
$c^2y+cy^2=1/d^3$ for some $d$, 
$\bs{b}:=(1,c,d,dy,dc,dc^2y)$ solves (\ref{eq:eq31}) and
(\ref{eq:eq32}) and has independent coordinates.
Hence, with probability at least $p-3/2^m=1/3-O(2^{-m/2})$, a
uniformly drawn $y\in\eftwom$ results in a solution to (\ref{eq:eq31})
and (\ref{eq:eq32}) with independent coordinates. 
\end{enumerate}
\end{proposition}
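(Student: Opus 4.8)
The plan is to handle the two parts of the proposition separately and then combine them into the final probability bound.

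For the first part, the key observation is that the map $L\colon y\mapsto c^2y+cy^2$ is $\eftwo$-\emph{linear}: it is the sum of the Frobenius-type term $cy^2$ and the linear term $c^2y$. Its kernel is $\{y:cy(c+y)=0\}=\{0,c\}$, so its image $S:=L(\eftwom)$ is an $\eftwo$-subspace of dimension $m-1$, and $L$ is exactly $2$-to-$1$ onto $S$. Since $m$ is even, $3\mid 2^m-1$, so the set $H$ of nonzero cubes is the index-$3$ subgroup of $\eftwom^*$, with $|H|=(2^m-1)/3$. Now $c^2y+cy^2$ has a nonzero cube root iff $L(y)\in H$, and the number of such $y$ is $2|H\cap S|$ (as $0\notin H$); hence the probability in question equals $|H\cap S|/|S|$. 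Applying Lemma \ref{lemma:kopparty} with $p=2$, $q=2^m$, the subspace $S$, and the subgroup $H$ gives $|H\cap S|/|S|\ge |H|/q-\sqrt q/|S|=\frac13-\frac1{3q}-\frac2{\sqrt q}=p$, exactly as claimed.

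For the second part I would first verify the two equations. Grouping the coordinates as $(b_1,b_2)=(1,c)$, $(b_3,b_4)=(d,dy)$, $(b_5,b_6)=(dc,dc^2y)$, note that $(b_3,b_4,b_5,b_6)$ has precisely the shape $(x,\tilde y,cx,c^2\tilde y)$ of Lemma \ref{lemma:x1} with $x=d$, $\tilde y=dy$. Thus (\ref{eq:oddj2}) yields $f_2(b_3,b_4)+f_2(b_5,b_6)=0$ and $f_1(b_3,b_4)+f_1(b_5,b_6)=(1+c)d^{2}(dy)+(1+c^2)d(dy)^{2}=d^3(c^2y+cy^2)$, using $c^2=c+1$; since $c^2y+cy^2=1/d^3$ this equals $1$. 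A direct computation on the first pair gives $f_1(1,c)=c+c^2=1$ and $f_2(1,c)=c+c^4=0$ (as $c^4=c$). Summing, the left-hand side of (\ref{eq:eq31}) is $1+1=0$ and that of (\ref{eq:eq32}) is $0$, so both equations hold.

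The main effort, and the expected obstacle, is the $\eftwo$-independence of $\{1,c,d,dy,dc,dc^2y\}$. Since $\{1,c\}$ and $\{1,c^2\}$ each span $\ef_4$ over $\eftwo$, the $\eftwo$-span of these six elements is the $\ef_4$-span of $\{1,d,dy\}$, so independence is equivalent to $\{1,d,dy\}$ being $\ef_4$-linearly independent. I would rule out a nontrivial relation $\lambda_0+\lambda_1 d+\lambda_2 dy=0$ (with $\lambda_i\in\ef_4$) by cases. If $\lambda_2=0$, the relation forces $d\in\ef_4^*$, hence $d^3=1$ and $c^2y+cy^2=1$, whose only roots are $y\in\{1,c^2\}\subseteq\ef_4$, contradicting $y\notin\ef_4$. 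If $\lambda_2\neq0$, then $\lambda_1+\lambda_2y\neq0$ and $\lambda_0\neq0$ (else $d=0$), so $\lambda_0^3=1$, and comparing $d=\lambda_0/(\lambda_1+\lambda_2y)$ with $d^3=1/(c^2y+cy^2)$ gives $(\lambda_1+\lambda_2y)^3=c^2y+cy^2$. The crux is that this has no root $y\notin\ef_4$: substituting $z:=\lambda_1+\lambda_2y$ turns it into the monic cubic $R(z)=z^3+\mu z^2+\mu^2z+K\in\ef_4[z]$ with $\mu:=c\lambda_2\in\ef_4^*$ and $K=\mu^2\lambda_1+\mu\lambda_1^2=\mu\lambda_1(\mu+\lambda_1)$. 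A short check using $1+c+c^2=0$ shows $K\in\{0,1\}$, and accordingly $R(z)=z(z+c\mu)(z+c^2\mu)$ when $K=0$ and $R(z)=(z+\mu)^3$ when $K=1$; in either case $R$ splits completely over $\ef_4$, forcing $z\in\ef_4$ and hence $y\in\ef_4$, a contradiction. This gives the independence.

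Finally, I would assemble the probability statement: by the first part a uniform $y$ makes $c^2y+cy^2$ a nonzero cube with probability at least $p$; removing the few exceptional $y\in\ef_4$ (contributing probability $O(2^{-m})$) leaves, by the second part, a genuine solution with independent coordinates with probability at least $p-3/2^m=\frac13-O(2^{-m/2})$, where the $2/\sqrt q$ term dominates the error.
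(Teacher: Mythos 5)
Your proposal is correct and follows essentially the same route as the paper's proof: part 1 is identical (the linearized map $y\mapsto c^2y+cy^2$ with kernel $\{0,c\}$, the cube subgroup $H$, and Lemma \ref{lemma:kopparty}), part 2 verifies the equations via the general identity (\ref{eq:oddj2}) applied to $(d,dy,dc,dc^2y)$ just as the paper does via Lemma \ref{lemma:x1} and homogeneity, and the independence argument makes the same reduction to $\ef_4$-independence of $\{1,d,dy\}$ followed by the same cubing trick. The only cosmetic difference is that you package the paper's case analysis on vanishing coefficients into a single factored cubic $R(z)$ splitting over $\ef_4$, which is a slightly tidier presentation of the identical idea.
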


\begin{proof}
1. The polynomial $f(Y):=c^2Y+cY^2$ is $\eftwo$-linearized with a
one-dimensional kernel, $\{0,c\}$. Hence $S:=f(\eftwom)$ is an
$\eftwo$-subspace of dimension $m-1$. Also, $H:=\{x^3|x\in\eftwom^*\}$
is a multiplicative subgroup of order $|H|=(2^m-1)/3$ (recall that $m$
is even). From Lemma \ref{lemma:kopparty}, it follows that $|H\cap
S|/|S|\geq |H|/q-\sqrt{q}/|S|$. As $f$ defines a linear map, $f(y)$ is
distributed uniformly over $S$ when $y$ is distributed uniformly over
$\eftwom$, and the first assertion follows.

2. Substituting $b_1=1,b_2=c$ in (\ref{eq:eq31}) and (\ref{eq:eq32}),
we obtain
\begin{eqnarray}
b_3^2b_4+b_3b_4^2 + b_5^2b_5+b_5b_6^2 & = & 1
\label{eq:eq31t}\\ 
b_3^4b_4+b_3b_4^4 + b_5^4b_5+b_5b_6^4 & = & 0
\label{eq:eq32t} 
\end{eqnarray}
From Lemma \ref{lemma:x1}, we know that
$\bs{b}':=(b_3,b_4,b_5,b_6)=(1,y,c,c^2y)$ solves (\ref{eq:eq32t}), and
that for $\bs{b}'$, the left-hand side of (\ref{eq:eq31t}) equals
$c^2y+cy^2=1/d^3$. Hence $d\bs{b'}$ satisfies both (\ref{eq:eq31t}) and
(\ref{eq:eq32t}).  

Let $Y$ be the set of all $y$ with a non-zero cube root for
$c^2y+cy^2$. For $y\in Y\smallsetminus \ef_4$, suppose
that there exist $a_1,\ldots,a_6\in \eftwo$ such 
that $a_1+a_2c+a_3d+a_4dy+a_5dc+a_6dc^2y =0$. Then
$$
dy(a_4+a_6c^2)+d(a_3+a_5c)+a_1+a_2c=0.
$$
It is therefore sufficient to show that $1,d,dy$ (where
$1/d^3=c^2y+cy^2$) are linearly independent over $\ef_4$. First, since
we exclude all $y\in \ef_4$, we assure that $d\notin \ef_4$, for if
$d\in \ef_4$, then  $c^2y+cy^2=1$, and $y\in\{1,c^2\}\subset \ef_4$. 

Suppose that there exist $a_1,a_2,a_3\in \ef_4$ such that
$a_1+a_2d+a_3dy=0$. It is easily verified that if one of the $a_i$ is
zero, then so are the other two. For example, if $a_2=0$, then
$a_1=a_3dy$. If either $a_1=0$ or $a_3=0$, then 
we are done. Otherwise, if both $a_1$ and $a_3$ are non-zero, then
raising both sides to the power of $3$, we obtain $y^3=1/d^3$, that
is, $y^3=c^2y+cy^2$, which is equivalent to $y(c^2+cy+y^2)=0$. The
solutions to this equation are $0,1,c^2$, which all lie in $\ef_4$,
and we obtain a contradiction for $y\notin \ef_4$. 

Suppose therefore that $a_1,a_2,a_3$ are all non-zero. Raising both
sides of $1/d=(a_2+a_3y)/a_1$ to the power of $3$, we obtain
$$
c^2y+cy^2=1+y^3+a_2^2a_3y+a_2a_3^2y^2,
$$
that is,
$$
y^3+y^2(c+a_2a_3^2)+y(c^2+a_2^2a_3)+1=0.
$$
This is an equation of the form $y^3+ay^2+a^2y+1=0$ for some
$a\in\ef_4$. If $a=0$, then the equation reads $y^3=1$, and there are
no solutions for $y\notin \ef_4$. Otherwise, if $a\neq 0$, then
the equation is $(y+a)^3=0$, which again leads to a contradiction
for $y\notin \ef_4$.
\end{proof}

\subsubsection{The case $i=4$, with $4|m$}\label{sec:ci4}
For $i=4$, the set of equations (\ref{eq:main}) has the following form:
\begin{eqnarray}
f_1(b_1,b_2)+f_1(b_3,b_4)=f_1(b_5,b_6)+f_1(b_7,b_8)
\label{eq:eq41}\\ 
f_2(b_1,b_2)+f_2(b_3,b_4)=f_2(b_5,b_6)+f_2(b_7,b_8)
\label{eq:eq42}\\
f_3(b_1,b_2)+f_3(b_3,b_4)=f_3(b_5,b_6)+f_3(b_7,b_8)
\label{eq:eq43} 
\end{eqnarray}

\begin{proposition}\label{prop:i4}
Suppose that $m\geq 8$ and $4|m$, so that $\ef_{16}\subset \eftwom$. Let
$y\in \eftwom\smallsetminus \ef_{16}$, and let $c\in \ef_4$ be
primitive. Let $d\in\ef_{16}$ be a primitive $5$-th root of unity. Then 
$$
\bs{b}=(b_1,\ldots,b_8):=(1,y,c,cy,d,dy,dc,dcy)
$$
solves (\ref{eq:eq41})--(\ref{eq:eq43}) and has $\eftwo$-linearly
independent coordinates.
\end{proposition}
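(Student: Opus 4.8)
The plan is to treat the two assertions separately: first verify that $\bs{b}$ solves (\ref{eq:eq41})--(\ref{eq:eq43}), then establish the $\eftwo$-independence of its eight coordinates. The structural observation I would exploit throughout is that the four pairs entering the equations are all scalar multiples of the single pair $(1,y)$: explicitly $(b_1,b_2)=(1,y)$, $(b_3,b_4)=c\cdot(1,y)$, $(b_5,b_6)=d\cdot(1,y)$, and $(b_7,b_8)=dc\cdot(1,y)$. Since $f_j$ is homogeneous of degree $2^j+1$, one has $f_j(\lambda,\lambda y)=\lambda^{2^j+1}f_j(1,y)$ for every scalar $\lambda$, so the $j$-th equation collapses to the scalar identity
$$
\bigl(1+c^{2^j+1}\bigr)f_j(1,y)=d^{2^j+1}\bigl(1+c^{2^j+1}\bigr)f_j(1,y),
$$
equivalently, in characteristic $2$, $\bigl(1+c^{2^j+1}\bigr)\bigl(1+d^{2^j+1}\bigr)f_j(1,y)=0$.

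The first step is then a short residue computation. For $j=1$ and $j=3$ the exponent satisfies $2^j+1\equiv 0\pmod 3$, so since $c$ has order $3$ we get $c^{2^j+1}=1$ and the factor $1+c^{2^j+1}$ vanishes; for $j=2$ we have $2^j+1=5$, and since $d$ is a primitive fifth root of unity the factor $1+d^5$ vanishes. Hence all three equations hold identically, for every admissible $y$. This part is entirely routine once the homogeneity reduction is made; the only prerequisite to record is that $\ef_4\subseteq\eftwom$ and $\ef_{16}\subseteq\eftwom$ (guaranteed by $4\mid m$), so that $c$ and $d$ exist with the stated orders.

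The substance of the argument is the linear independence, which I would derive from the field tower $\eftwo\subset\ef_4\subset\ef_{16}$. First I would show that $\{1,c,d,dc\}$ is an $\eftwo$-basis of $\ef_{16}$: the pair $\{1,c\}$ is an $\eftwo$-basis of $\ef_4$ because $c$ is primitive (so $c\notin\eftwo$), while $d\notin\ef_4$ (a primitive fifth root of unity cannot lie in $\ef_4$, whose multiplicative group has order $3$), so $\{1,d\}$ is an $\ef_4$-basis of the degree-two extension $\ef_{16}/\ef_4$; multiplying the bases through the tower gives $\{1,c,d,dc\}$ as the desired $\eftwo$-basis. Consequently the first four coordinates span $\ef_{16}$, while the last four, being $y\cdot\{1,c,d,dc\}$, span the $\ef_{16}$-line $y\,\ef_{16}$. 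The final step is to note that $\ef_{16}\cap y\,\ef_{16}=\{0\}$: a nonzero common element $z=yw$ (with $w\in\ef_{16}^*$) would force $y=zw^{-1}\in\ef_{16}$, contradicting $y\notin\ef_{16}$. Since the two four-element sets are bases of subspaces meeting only in $0$, their union is $\eftwo$-independent.

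The main obstacle is conceptual rather than computational: recognizing that scaling the one normalization pair $(1,y)$ by the four units $1,c,d,dc$ simultaneously annihilates the ``wrong'' equations through complementary arithmetic, $c$ (order $3$) killing $j=1,3$ and $d$ (order $5$) killing $j=2$, while those same four units form an $\eftwo$-basis of $\ef_{16}$ that is automatically independent from its $y$-translate whenever $y\notin\ef_{16}$. Once this is seen, both checks are short; the only care needed is the bookkeeping of the exponents $2^j+1$ modulo $3$ and $5$.
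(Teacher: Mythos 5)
Your proof is correct and takes essentially the same route as the paper: both arguments rest on the homogeneity of $f_j$ together with the multiplicative orders of $c$ (killing $j=1,3$ since $2^j+1\equiv 0 \pmod 3$) and of $d$ (killing $j=2$ since $d^5=1$), and both independence arguments rest on the tower $\eftwo\subset\ef_4\subset\ef_{16}$ together with $y\notin\ef_{16}$. Your packaging differs only cosmetically --- a single factorized identity $(1+c^{2^j+1})(1+d^{2^j+1})f_j(1,y)=0$ in place of invoking Proposition \ref{prop:i2}, and the decomposition $\ef_{16}\oplus y\,\ef_{16}$ in place of the paper's direct coefficient comparison.
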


\begin{proof}
Note that 
\begin{equation}\label{eq:homog}
f_j(b_5,b_6)+f_j(b_7,b_8) = d^{2^j+1} \big(f_j(b_1,b_2)+f_j(b_3,b_4) \big)
\end{equation}
for all $j$. By Proposition \ref{prop:i2}, we have
$f_j(b_1,b_2)+f_j(b_3,b_4)=0$ for $j=1,3$.
This means that for $\bs{b}$ defined in the proposition, both sides of
(\ref{eq:eq41}) and (\ref{eq:eq43}) are zero, and since $d^5=1$, it
follows from (\ref{eq:homog}) that (\ref{eq:eq42}) also holds.

To prove independence, suppose that for $a_1,\ldots,a_8\in \eftwo$, it
holds that 
$$
a_1+a_2y+a_3c+a_4cy+a_5d+a_6dy+a_7dc+a_8dcy=0.
$$
Then
$$
y\big(d(a_6+a_8c)+a_2+a_4c)\big)+\big(d(a_5+a_7c)+a_1+a_3c\big)=0.
$$
This is an equation of the form $\gamma y +\delta=0$ for
$\gamma,\delta\in \ef_{16}$. Since $y\notin\ef_{16}$, we must have
$\gamma=\delta=0$. 
As $1,d$ are $\ef_4$-linearly independent ($d\notin \ef_4$), we
conclude that $a_1=\cdots =a_8=0$, as required.
\end{proof}

\subsection{Explicit support of a minimum weight
codeword}\label{sec:usingdct} 
Write $\bs{c}(i,0)$ for the evaluation vector of the polynomial $f$ in
Theorem \ref{thm:basis}, which is a minimum-weight
codeword of $\ebbch{2^{m-1}-2^{m-1-i}}$. For $s\in\{1,\ldots, m-2i\}$,
let $\bs{c}(i,s)$ be the minimum-weight codeword of
$\ebbch{2^{m-1-s}-2^{m-1-i-s}}$ constructed from $\bs{c}(i,0)$ through the
down-conversion theorem  (Theorem \ref{thm:dconversion}), as indicated
in Remark \ref{rem:alls}. Next, we would like to specify in detail the
support of $\bs{c}(i,s)$ for all $s$.

Using the notation of Theorem \ref{thm:basis}, write
$V_0:=\linspan_{\eftwo}(b'_{2i+1},\ldots,b'_m)$, and let $S_0\subset
\linspan_{\eftwo}(b'_1, \ldots, b'_{2i})$ be defined as
the support of $\tr(b_1X)\tr(b_2X)+\cdots +\tr(b_{2i-1}X)\tr(b_{2i}X)$
in $\linspan_{\eftwo}(b'_1, \ldots, b'_{2i})$. 
By the definition of $\bs{c}(i,0)$, the support $S$ of $\bs{c}(i,0)$
is given by $S=S_0+V_0$.  
In the general case of $s\geq 0$, we choose some subspace
$V\subseteq V_0$ of dimension $s$ (for $s\leq \dim(V_0)=m-2i$), and
w.l.o.g.~we will assume that
$V=\linspan_{\eftwo}(b'_{2i+1},\ldots,b'_{2i+s})$. 
Recall that $A(X)$ is defined as the annihilator of $V$,
and that $A(X)$ is a linearized polynomial. Note that 
finding the $s$ unknown coefficients defining $A(X)$ amounts to
solving a system of $s$ linear equations in $s$ unknowns, and hence
has complexity $O(s^3)$.\footnote{Writing
$A(X)=X^{q^s}+a_{s-1}X^{q^{s-1}}+\cdots+a_0X$ for the annihilator of
the subspace with basis $\gamma_1,\ldots,\gamma_s\in\eftwom$, it is
readily verified that 
$$
\begin{pmatrix}
\gamma_1 & \gamma_1^q & \cdots & \gamma_1^{q^{s-1}}\\
\gamma_2 & \gamma_2^q & \cdots & \gamma_2^{q^{s-1}}\\
\vdots & \vdots & \ddots & \vdots\\
\gamma_s & \gamma_s^q & \cdots & \gamma_s^{q^{s-1}}\\
\end{pmatrix}
\begin{pmatrix}
a_0\\
a_1\\
\vdots\\
a_{s-1}
\end{pmatrix}
=\begin{pmatrix}
\gamma_1^{q^s}\\
\gamma_2^{q^s}\\
\vdots\\
\gamma_2^{q^s}
\end{pmatrix},
$$
and the {\it Moore matrix} on the left is invertible, since the
$\gamma_j$ are linearly independent \cite{LN97}.}

\begin{remark}\label{rem:indep}
{\rm
Note that for any distinct indices
$j_1,\ldots,j_{\ell}\in\{1,\ldots,m\}\smallsetminus 
\{2i+1,\ldots,2i+s\}$ (for some $\ell\leq m-s$), it holds that 
$A(b'_{j_1}), \ldots, A(b'_{j_{\ell}})$ are $\eftwo$-linearly
independent, since otherwise $A(X)$ would have roots outside $V$,
contradicting its definition.
}   
\end{remark}

By Theorem \ref{thm:dconversion}, the support of $\bs{c}(i,s)$ is 
given by
\begin{multline}\label{eq:as}
\supp(\bs{c}(i,s))=A(S)=\\
A(S_0)+\linspan_{\eftwo}(A(b'_{2i+s+1}),
\ldots, A(b'_m)). 
\end{multline}

To continue, we will use the following notation: 
for a row or column vector $\bs{x}$, we will write $\setv(\bs{x})$ for
the set of all entries of $\bs{x}$. For example,
$\setv\big((1,2,2,3,4,4)\big)=\{1,2,3,4\}$.
Recall that the evaluation vector of $(x_1,\ldots,x_{2i})\mapsto
x_1x_2+\cdots +x_{2i-1}x_{2i}$ on $\eftwo^{2i}$ has weight
$w:=2^{2i-1}-2^{i-1}$. 
Let $M\in \eftwo^{w\times 2i}$ be a matrix whose rows
are all the vectors $(x_1,\ldots, x_{2i})$ for which $x_1x_2+\cdots
+x_{2i-1}x_{2i}=1$.  Then the set $S_0$ defined above is given by
$$
S_0=\setv\left( M\cdot \begin{pmatrix}
b'_1\\
\vdots\\
b'_{2i}
\end{pmatrix}
\right),
$$   
and from (\ref{eq:as}), 
\begin{multline}\label{eq:finalsupp}
\supp(\bs{c}(i,s))=\setv \left(M\cdot \begin{pmatrix}
A(b'_1)\\
\vdots\\
A(b'_{2i})
\end{pmatrix}\right) +\\
 \linspan_{\eftwo}(A(b'_{2i+s+1}),\ldots,
A(b'_m)). 
\end{multline}

\begin{remark}\label{rem:bool}
{\rm
If we complete
$A(b'_1),\ldots,A(b'_{2i}),A(b'_{2i+s+1}),\ldots,A(b'_m)$ to a 
basis $D$ for $\eftwom/\eftwo$,\footnote{Note that the specified elements
are linearly independent by Remark \ref{rem:indep}.} say, by elements
$b''_{2i+1},\ldots,b''_{2i+s}$ corresponding to coefficients
$x_{2i+1},\ldots,x_{2i+s}$, then the support in (\ref{eq:finalsupp}) is
the support of
\begin{multline*}
\Big((x_1,\ldots,x_m)\mapsto\\
(x_1x_2+\cdots+x_{2i-1}x_{2i})(1+x_{2i+1})\cdots(1+x_{2i+s})
\Big)_D^{\mathrm{fld}},
\end{multline*}
similarly to \cite{KaLi72}.
}
\end{remark}

\subsubsection{Examples}\label{sec:examples} 

\begin{example}\label{eg:i3}
{\rm
In this example we give the concrete support of minimum-weight
codewords of {\bf non}-extended BCH codes of designed
distance $27$, for several values of $m$. This case
corresponds to the parameters $i=3$ and $s=m-2i=m-6$. Note that the
support for a codeword of the non-extended code can be obtained as
follows: 1.~Find the support $S$ of a codeword of the extended code.
2.~Take $(x+S)\setminus \{0\}$ for an arbitrary element
$x\in S$. The supports in Table \ref{table:t27}
are described as follows. First, we fix some primitive element
$\alpha\in \eftwom$, which we specify by its minimal polynomial. Then,
an index $j\in \{0,\ldots, 2^m-2\}$ appears in the ``log support''
column iff  $\alpha^j$ is in the support. Note also that the support
for even $m$ was obtained using Section \ref{sec:ci3}, while the
support for odd $m$ was obtained using the heuristic method from
Appendix \ref{app:heuristics}. 

\begin{table*}[h]
\centering
\begin{tabular}{|c|c|c|}
\hline
$m$ & primitive polynomial & log support \\
\hline
8 & $X^8+X^4+X^3+X^2+1$ &
$30,37,42,51,57,61,74,77,90,$\\
&&
$91,99,115,121,134,136,146,154,$\\
&&
$162,176,193,207,227,231,239,242,244,245$
\\
\hline
9 & $X^9+X^4+1$ & $21,52,54,58,108,109,122,160,193,$
\\
&& $195,197,204,218,236,238,247,276,292,$
\\
&& $312,374,381,391,396,411,413,479,503$
\\
\hline
10 &$X^{10}+X^3+1$  & $46, 48, 108,118, 160, 251, 330, 341,346,$ 
\\
&& $366,385,389,451,452,459,541,671,682,$
\\
&& $687,693,726,793,800,842,869,933,944$
\\
\hline
11 & $X^{11}+X^2+1$ & $5,19,29,77,128,188,245,256,291,$
\\ 
&& $356,524,737,832,906,992,1182,1187,1233,$
\\ 
&& $1284,1422,1453,1507,1514,1631,1740,1782,1875$
\\
\hline
12 &  $X^{12}+X^6+X^4+X+1$ & $186,325,477,500,707,749,1036,1077,1147,$
\\ 
&& $1225,1609,1637,1733,1842,2367,2401,2442,2853,$
\\
&& $2916,2974,3002,3103,3207,3230,3437,3734,3877$
\\
\hline
13 &  $X^{13}+X^4+X^3+X+1$ & $1104,1213,1261,1381,1823,2044,3254,4896,4982,$
\\ 
&& $5010,5017,5498,5722,5866,5902,6527,7003,7196,$
\\
&& $7212,7579,7648,7725,7731,7793,7810,7960,8093$
\\
\hline
14 &  $X^{14}+X^{10}+X^6+X+1$ & $75,2064,3354,4180,4411,5074,5536,5559,6082,$
\\ 
&& $7083,7525,7586,8815,9600,9872,10664,11020,11180,$
\\
&& $12496,12544,13047,13956,14276,15061,15798,15996,16245$
\\
\hline
15 &  $X^{15}+X+1$ & $458,508,894,1188,1453,2023,4522,6610,10300,$
\\ 
&& $10946,11107,11370,12145,12635,14841,18805,19244,19780,$
\\
&& $19915,20926,23454,24062,25744,26911,27453,27851,29367$
\\
\hline
16 &  $X^{16}+X^{12}+X^3+X+1$ & $1535,3131,5026,6975,7701,7713,14337,15167,24718,$
\\ 
&& $28238,29546,29558,34509,36182,37012,37492,45225,46563,$
\\
&& $46821,50072,51391,53448,53461,53989,56354,59337,62153$
\\
\hline

\end{tabular}
\caption{The supports of minimum-weight words in $\bbch{27}$ for several
values of $m$. Presented numbers are those values of $j$ for which
$\alpha^j$ is in the support, where $\alpha$ is a primitive element
with the specified minimal polynomial.}
\label{table:t27}
\end{table*}
}
\end{example}

\begin{example}
{\rm
Take $m=16$, $i=2$, $s=10$. After puncturing (as described in Example
\ref{eg:i3}), we obtain a codeword of weight $23$ in
$\bbch{23}\subseteq \eftwo^{2^{16}-1}$ with the support specified in
Table \ref{table:t23}. As
above, the support is specified by those $j$ for which $\alpha^j$ is
in the support, where $\alpha$ is a root of the primitive polynomial
$X^{16}+X^{12}+X^3+X+1$.
\begin{table}[h!]
\centering
\begin{tabular}{|c|}
\hline
$613,7637,10903,13152,13971,14915,14983,$\\
$18473,30809,30977,37218,39604,40125,41649,$\\
$48399,48563,51105,51306,$ $53563,55559,55823,$\\ $56625,$ $62722$ 
\\ \hline
\end{tabular}
\caption{The support of a minimum-weight word in $\bbch{23}$ for
$m=16$. Presented numbers are those values of $j$ for which $\alpha^j$ is 
in the support, where $\alpha$ is a root of $X^{16}+X^{12}+X^3+X+1$.}
\label{table:t23}
\end{table}
}
\end{example}

\section{A construction via Gold functions}\label{sec:gold} 
The Boolean functions  $(x_1,\ldots,x_{2i})\mapsto x_1x_2+\cdots +
x_{2i-1}x_{2i}$ we used to construct minimum-weight words are
examples of quadratic {\it bent functions}
\cite[Sec.~14.5]{MacSl} $\eftwo^{2i}\to \eftwo$.\footnote{In fact, by
Dickson's Theorem \cite[Thm.~15.2.4]{MacSl}, \emph{all} quadratic bent
functions are affine equivalent to this function.}
This raises the natural question whether {\it Gold
functions} (see, e.g., \cite{Le06}), which are quadratic bent
functions $\ef_{2^{2i}}\to \eftwo$ 
that do not rely on a choice of basis, can also be used to construct
minimum-weight words. We answer this question in the
affirmative in Proposition \ref{prop:gold}, by showing that Gold
functions can be used to construct minimum-weight words when
$2i|m$. We note that the results of this section include cases not
covered in Section \ref{sec:main}; see Table \ref{table:cover}.   

While in Section \ref{sec:main} we first constructed words in BCH
codes of large designed distance (with $s=0$) and then used the
down-conversion theorem (Theorem \ref{thm:dconversion})
to cover lower designed distances, in this section we work the other
way around: The construction of Proposition \ref{prop:gold} is for BCH
codes of small designed distance, with $s=m-2i$. The up-conversion
theorem (Theorem \ref{thm:uconversion}) can then be used to cover
higher distances.

Let us first briefly recall the definition of bent functions in
general, and Gold functions in particular. 

\begin{definition}{(Bent functions)}
{\rm
For $i\in \bbNp$, a function $f\colon \eftwo^{2i}\to \eftwo$ is called
{\bf bent} if the Hamming distance between the evaluation vector
$\bs{f}$ of $f$ and the evaluation vector $\bs{g}$ of any affine
functional $g\colon \eftwo^{2i}\to\eftwo$ is at least $2^{2i-1}-2^{i-1}$.
}
\end{definition}
In fact, if $f$ is bent, then it is known that the Hamming distance
between $\bs{f}$ and the evaluation vector $\bs{g}$ of any affine
functional is  $2^{2i-1}\pm 2^{i-1}$ \cite[Theorem 14.5.6]{MacSl}.

\begin{definition}{(Gold functions)}\label{def:gold}
{\rm
For $i,r\in\bbNp$, let $d:=2^r+1$, and suppose that $\gcd(d,2^{2i}-1)>1$, so
that the function $\ef_{2^{2i}}\to\ef_{2^{2i}}, x\mapsto x^d$ is not
surjective. Let $\beta\in\ef_{2^{2i}}$ be an element that does not have a $d$-th
root. Then the function $\ef_{2^{2i}}\to\eftwo, x\mapsto\tr(\beta
x^d)$ is called a {\bf Gold function}.
}
\end{definition}
For a proof that Gold functions are bent, see \cite[Sec.~2A]{Le06}. We
note, however, that this will not be required for the proof of
Proposition \ref{prop:gold}.

It will be useful to introduce some additional notation that is
relevant only for the current section. For positive integers
$a|b$, let us write 
$\trv{b}{a}$ for the trace map $\ef_{2^{b}}\to\ef_{2^{a}}$,
$x\mapsto \sum_{j=0}^{b/a-1} x^{(2^{a})^j}$, and
$\normv{b}{a}$ for the {\it norm map}
$\ef_{2^{b}}\to\ef_{2^{a}}$, $x\mapsto \prod_{j=0}^{b/a-1}
x^{(2^{a})^j}$. Recall that for $a|b|c$, it holds that
$\trv{c}{a}=\trv{b}{a}\circ \trv{c}{b}$
(e.g., \cite[Theorem VI.5.1]{Lang}).

In the following proposition, we will use one particular Gold
function from the family defined in Definition \ref{def:gold}. 

\begin{proposition}\label{prop:gold}
For $i\in \bbNp$, let $\beta\in \ef_{2^{2i}}\smallsetminus
\ef_{2^i}$. Let $f_1(X):=T(\beta X^{2^i+1})\in \ef_{2^{2i}}[X]$, where
$T(X):=X+X^2+\cdots+X^{2^{2i-1}}$, and let
$f(X):=f_1(X)+1$. Let $\bs{f}:=(f(x))_{x\in \ef_{2^{2i}}}$ be the
evaluation vector of $f$.  Then the following holds: 
 
\begin{enumerate}
\item  $\bs{f}$ is a minimum weight codeword of
$\ebch(2^{2i-1}-2^{i-1})\subset \eftwo^{2^{2i}}$. 

\item Suppose that $2i|m$, let $g(X):=f(X)\cdot \prod_{\gamma\in
\eftwom\smallsetminus \ef_{2^{2i}}}(X-\gamma)\in \eftwom[X]$, and put
$\bs{g}:=(g(x))_{x\in\eftwom}$. Then $\bs{g}$ is a minimum-weight
codeword of $\ebch(2^{2i-1}-2^{i-1})\subset \eftwo^{2^m}$.
\end{enumerate}  

\end{proposition}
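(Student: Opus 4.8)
The plan is to treat the two parts in turn; write $d:=2^{2i-1}-2^{i-1}$ throughout. For Part 1, with $2i$ playing the role of $m$, I would first reduce $f_1(X)=\sum_{j=0}^{2i-1}\beta^{2^j}X^{(2^i+1)2^j}$ modulo $X^{2^{2i}}+X$, i.e.\ reduce each exponent $(2^i+1)2^j=2^{i+j}+2^j$ modulo $2^{2i}-1$. I expect the summands with $j\in\{i,\ldots,2i-1\}$ to fold back via $2^{2i}\equiv 1$ onto those with $j\in\{0,\ldots,i-1\}$, so that the $j$-th and $(j{+}i)$-th summands combine into a single monomial $X^{2^k+2^{k+i}}$ (with $k=j$) whose coefficient is $\beta^{2^k}+\beta^{2^{k+i}}=(\trv{2i}{i}(\beta))^{2^k}$. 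Since $\beta\notin\ef_{2^i}$ we have $\trv{2i}{i}(\beta)=\beta+\beta^{2^i}\neq 0$, so the monomial for $k=i-1$ survives and $\deg(f_1\bmod X^{2^{2i}}+X)=2^{2i-1}+2^{i-1}$. As $f=f_1+1$ is $\eftwo$-valued (the trace lands in $\eftwo$) and has degree exactly $2^{2i}-d$, this places $f\in\pdvar{d-1}$ over $\ef_{2^{2i}}$, hence $\bs f\in\ebch(d)$, exactly as in the degree computation of Proposition~\ref{prop:subcode}.

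Because $f_1$ and $f=f_1+1$ both yield codewords of $\ebch(d)$, their weights are a priori $2^{2i-1}\pm 2^{i-1}$ (this also follows from the bentness of Gold functions), and the crux is to determine which of the two realizes the minimum weight $d$. I would compute $\wh(\bs f)=\#\{x:\trv{2i}{1}(\beta x^{2^i+1})=0\}=\tfrac12\big(2^{2i}+\sum_{x}(-1)^{\trv{2i}{1}(\beta x^{2^i+1})}\big)$ and evaluate the character sum through the norm. Using $x^{2^i+1}=\normv{2i}{i}(x)\in\ef_{2^i}$, the transitivity $\trv{2i}{1}=\trv{i}{1}\circ\trv{2i}{i}$, and the $\ef_{2^i}$-linearity $\trv{2i}{i}(\beta N)=N\,\trv{2i}{i}(\beta)$ for $N\in\ef_{2^i}$, the exponent becomes $\trv{i}{1}(t\,\normv{2i}{i}(x))$ with $t:=\trv{2i}{i}(\beta)\in\ef_{2^i}^*$. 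Since $\normv{2i}{i}$ sends $\ef_{2^{2i}}^*$ onto $\ef_{2^i}^*$ with every fiber of size $2^i+1$ and sends only $0$ to $0$, the sum collapses to $1+(2^i+1)\sum_{y\in\ef_{2^i}^*}(-1)^{\trv{i}{1}(ty)}=1+(2^i+1)(-1)=-2^i$, where I use that a nonzero $\eftwo$-linear functional on $\ef_{2^i}$ is balanced. Hence $\wh(\bs f)=2^{2i-1}-2^{i-1}=d$, so $\bs f$ is minimum-weight. I expect this character-sum evaluation to be the main obstacle, since everything else reduces to degree bookkeeping.

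For Part 2, I would show that $\bs g$ is just the zero-extension of $\bs f$ from $\ef_{2^{2i}}$ to $\eftwom$. For $x\notin\ef_{2^{2i}}$ the factor $\prod_{\gamma\in\eftwom\smallsetminus\ef_{2^{2i}}}(x-\gamma)$ vanishes, so $g(x)=0$; for $x\in\ef_{2^{2i}}$, Lemma~\ref{lemma:subspace} (applied to the $\eftwo$-subspace $\ef_{2^{2i}}\subseteq\eftwom$) makes this product a constant $c$, and evaluating at $x=0$ together with the fact that the product of all nonzero elements of a finite field of characteristic $2$ equals $1$ gives $c=1$, so $g(x)=f(x)$. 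Thus $\bs g$ is $\eftwo$-valued with $\wh(\bs g)=\wh(\bs f)=d$. Finally $\deg g=\deg f+(2^m-2^{2i})=2^{2i-1}+2^{i-1}+2^m-2^{2i}=2^m-d$, so $g\in\pdvar{d-1}$ over $\eftwom$ and $\bs g\in\ebch(d)$; as its weight equals the designed distance, it is minimum-weight. This part should be routine once $c=1$ is verified, and is in the same spirit as the extend-by-zero mechanism behind Proposition~\ref{prop:subcode} and the conversion theorems.
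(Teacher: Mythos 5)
Your proposal is correct and follows essentially the same route as the paper: both prove the degree bound by reducing $f_1$ modulo $X^{2^{2i}}+X$ to $\sum_{k=0}^{i-1}\big(cX^{2^i+1}\big)^{2^k}$ with $c=\trv{2i}{i}(\beta)\neq 0$, and both obtain the weight from the factorization of $f_1$ through $\normv{2i}{i}$, whose fibers over $\ef_{2^i}^*$ have size $2^i+1$ --- your character sum is merely a repackaging of the paper's direct count of $|f_1^{-1}(0)|$. Likewise, in Part 2 your evaluation of the constant via Lemma \ref{lemma:subspace} and the product of all nonzero field elements is exactly the alternative the paper itself notes in a footnote to its geometric-series computation, so the two proofs coincide in substance.
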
 

\begin{proof} 
1. Let us first prove that\footnote{Note that as $f_1$ is known
to be bent \cite{Le06}, it is clear that the evaluation vector of either
$f_1$ or $f$ must have the specified weight. For completeness, we
include here a simple direct proof, which is also useful for proving
ahead that $\bs{f}$ is in the required BCH code.}
$\wt(\bs{f})=2^{2i-1}-2^{i-1}$. Equivalently, we
have to prove that the number of \emph{zeros} of $f_1(X)$ in
$\ef_{2^{2i}}$ equals $2^{2i-1}-2^{i-1}$.

First, we note that for all $x\in \ef_{2^{2i}}$,
\begin{eqnarray}
f_1(x) & = & \trv{i}{1}\big(\trv{2i}{i}(\beta \normv{2i}{i}(x))\big)
\nonumber\\
 & = & \trv{i}{1}\big(\normv{2i}{i}(x)\cdot \trv{2i}{i}(\beta)\big)\text{
(linearity of $\tr$)} \nonumber\\
 & = & \trv{i}{1}(c\normv{2i}{i}(x)), \label{eq:intermedtr}
\end{eqnarray}
where $c:=\trv{2i}{i}(\beta)\in \ef_{2^i}$ is
non-zero by the assumption $\beta\notin \ef_{2^i}$. Let $M_c\colon
\ef_{2^i}\to \ef_{2^i}$ be the invertible linear map $y\mapsto c\cdot
y$. From (\ref{eq:intermedtr}), we have 
$$
f_1\colon \ef_{2^{2i}}\stackrel{\normv{2i}{i}}{\longrightarrow}
\ef_{2^i} \stackrel{M_c}{\longrightarrow} \ef_{2^i}
\stackrel{\trv{i}{1}}{\longrightarrow} \eftwo.
$$

Now, $(\trv{i}{1})^{-1}(0)$ is an $\eftwo$-subspace of dimension
$i-1$, and so is $V:=M_c^{-1}\big((\trv{i}{1})^{-1}(0)\big)$. Hence
$|V|=2^{i-1}$. Also, $\normv{2i}{i}$ restricts to a
surjective\footnote{As for all $x$,
$\normv{b}{a}(x)=x^{(2^{b}-1)/(2^{a}-1)}$, it is clear that a
primitive element is mapped to a primitive element, and hence
$\normv{b}{a}$ is surjective.}
homomorphism $\ef_{2^{2i}}^*\to \ef_{2^i}^*$, and hence the inverse
image of each non-zero element of $V$ has size $(2^{2i}-1)/(2^i-1)=2^i+1$,
while the inverse image of $0$ is $\{0\}$. Hence

\begin{eqnarray*}
|f_1^{-1}(0)| &= &|(\normv{2i}{i})^{-1}(V)|\\
 & = &(2^i+1)\cdot
(2^{i-1}-1)+1\\
&= & 2^{2i-1}-2^{i}+2^{i-1}-1+1\\
&=& 2^{2i-1}-2^{i-1}, 
\end{eqnarray*}

as required.

Next, we have to prove that $f$ is equivalent modulo $X^{2^{2i}}+X$ to a
polynomial in $P_{d-1}$ for 
$d:=2^{2i-1}-2^{i-1}$. Clearly, it is sufficient to prove that $f_1$
is equivalent to a polynomial of degree $\leq
2^{2i}-d=2^{2i-1}+2^{i-1}$ modulo $X^{2^{2i}}+X$. Now, it follows from
(\ref{eq:intermedtr}) that $f_1$ is equivalent to
\begin{equation}\label{eq:eich}
h(X):=\sum_{j=0}^{i-1} \big(c\cdot X^{1+2^i} \big)^{2^j},
\end{equation}
and it is clear that $\deg(h)=(1+2^i)2^{i-1}=2^{2i-1}+2^{i-1}$. This
completes the proof of part 1.

2. First note that
\begin{eqnarray*}
\prod_{\gamma\in \eftwom\smallsetminus \ef_{2^{2i}}}(X-\gamma) & = &
\frac{X^{2^m-1}+1}{X^{2^{2i}-1}+1}\\
& = &\sum_{j=0}^{(2^m-1)/(2^{2i}-1)-1}
(X^{2^{2i}-1})^j 
\end{eqnarray*}
takes the constant value $1$ on
$\ef_{2^{2i}}$.\footnote{Alternatively, this could be proved from
Lemma \ref{lemma:subspace}, by substituting $0$ to
verify that the single evaluation result is indeed $1$.}
It is therefore clear from
definition that the evaluation vector of $g$ takes 
value in $\eftwo$, and that $\wt(\bs{g})=\wt(\bs{f})$. Also, the
evaluation vector of $g$ is equal to that of  $\tilde{g}:=(1+h(X))\cdot
\prod_{\gamma\in \eftwom\smallsetminus \ef_{2^{2i}}}(X-\gamma)$ (with
$1+h(X)$ from (\ref{eq:eich}) instead of $f(X)$), and 
\begin{multline*}
\deg(\tilde{g}) = \deg(h)+2^m-2^{2i} = \\ 
2^{2i-1}+2^{i-1}+2^m-2^{2i} = 2^m-(2^{2i-1}-2^{i-1}),
\end{multline*}
as required.
\end{proof} 

Note that in part 2 of the proposition, the multiplication by
$\prod_{\gamma\in \eftwom\smallsetminus \ef_{2^{2i}}}(X-\gamma)$ has
the effect of 
completing the evaluation vector on $\ef_{2^{2i}}$ to an evaluation
vector on $\eftwom$ by padding with zeroes. This is just a special case
of the following observation:
If a code $D$ is obtained by shortening a code $C$, and if $D$ has a
codeword of weight $d$, then so does $C$ -- just pad the codeword of
$D$ with zeros. 

Note also that part 1 actually proves that the evaluation vectors of
$f$ and $f_1$ are bent: evaluation vectors of affine functionals such
as $x\mapsto \tr(\gamma x)+\veps$ (for $\gamma\in \ef_{2^{2i}}$ and
$\veps\in\eftwo$) are clearly in the BCH code (as they can be
presented as evaluation vectors of polynomials of a low enough
degree), and hence the distance of $\bs{f}$ from the evaluation of any
affine functional is at least\footnote{Note that $f$ itself is not
affine, as the weight of its evaluation vector is not in
$\{0,2^{2i-1},2^{2i}\}$.} $2^{2i-1}-2^{i-1}$.  From \cite[Theorem
14.6]{MacSl}, it follows that $f$ is bent.  

As {\it any} codeword of
$\ebch(2^{2i-1}-2^{i-1})\subset \eftwo^{2^{2i}}$ is either the evaluation
of an affine function, or 
nonlinear with distance at least $2^{2i-1}-2^{i-1}$ from any affine
functional, it also follows from \cite[Theorem 14.6]{MacSl} that all
the weights in this code are $\{0,2^{2i-1}-2^{i-1},
2^{2i-1},2^{2i-1}+2^{i-1},2^{2i}\}$.\footnote{In fact, it can be
verified that $\ebbch{2^{2i-1}-2^{i-1}}\subset \eftwo^{2^{2i}}$
coincides with the code from Problem 62 on pp.~252--253 of
\cite{MacSl}, specified by its non-zeros.}

\begin{example}
{\rm
Consider the case $i=5$, $m=10$, which is not covered by the results
of Section \ref{sec:main}. Part 1 of Proposition 5.3 states that a
minimum-weight 
codeword in the extended BCH code of length $2^m=2^{10}$ and designed
distance $d(m,0,5)=496$ can by obtained by simply evaluating the Gold
function from the Proposition. 
Consider now the case $i=5$, $m=20$, $s=10$. Part 2 of the same proposition
shows that the support of a  minimum-weight codeword of 
the extended BCH code of length $2^m=2^{20}$ and minimum distance $d(m,10,5)$
(which is again $496$) can be taken as identical to that of the
codeword of length $2^{10}$, when $\eftwomm{10}$ is considered as a
subfield of $\eftwomm{20}$ (recall that the support is specified by
elements of $\eftwomm{10}$). Finally, for $s<10$, we may obtain the
support of minimum-weight codewords of length $2^{20}$ and weight
$d(m,s,5)=d(m,10,5)\cdot 2^{10-s}$ by using the up-conversion
theorem. Clearly, the subspace $U$ in the up-conversion theorem can be
taken as any subspace that contains $\eftwomm{10}$.
}
\end{example}

\section{Conclusions and open questions}\label{sec:conclusions}
We have reduced the problem of finding minimum-weight codewords in
$\ebbch{2^{m-1}-2^{m-1-i}}\cap \RM(2,m)$ to the problem of finding a
solution $b_1,\ldots,b_{2i}$ with independent coordinates to the
system (\ref{eq:main}) of $i-1$ equations. By presenting deterministic
or probabilistic solutions for the cases $i=2,3,4$ (with some
limitations on $m$ described above), we have presented $O(m^3)$
algorithms for specifying the support of minimum-weight codewords of
$\ebbch{2^{m-1-s}-2^{m-1-i-s}}$ for the above values of $i$, and for
$s\in\{0,\ldots,m-2i\}$, using the down-conversion theorem. Along
the way, we have re-proved the down- and up-conversion theorems, using
a new method. We have also presented a construction via Gold
functions. Finally, we have shown (Appendix \ref{app:gk})
that the construction of \cite{GK12} is a special case of the current
construction for the case $i=2$. 

There are several open questions for future research:

\begin{itemize}

\item What is the success probability of the algorithm in Appendix
\ref{app:heuristics} for the case $i=3$, $m$ odd? Also, can the
probabilistic algorithm for $i=3$, $m$ even, be replaced by a closed-form
deterministic solution?

\item More generally, it would be interesting to find a solution of
(\ref{eq:main}) with independent coordinates for all values of $m,s,i$
considered in \cite{KaLi72}. In particular, is it possible to use
elimination with \grobner{} bases for finding a solution
\emph{with independent coordinates} in a reasonable complexity? Also, can
closed-form solutions be specified in this way?

\item As shown in Appendix \ref{app:affine}, the constructed codewords
are not affine generators for designed distance $>6$, because they are
evaluation vectors of Boolean functions of a too low degree. Is it
possible to find affine generators by solving equations such as
(\ref{eq:main}) for Boolean functions of higher degree than $2$?

\end{itemize}

\section*{Acknowledgment}
We thank Evgeny Blaichman and Ariel Doubchak for asking us the
question that lead to this research. Their request to
construct minimum-weight codewords for as many binary BCH codes as
possible has motivated and initiated the work on the current results.


\appendices

\section{Proofs of the conversion theorems}\label{app:conv}

\subsection{Proof of the down-conversion theorem}\label{app:proofdc}
Let $K\subseteq L$ be finite fields, 
$V\subseteq L$ be a $K$-vector space, and
$A(X):=\prod_{\gamma\in V}(X-\gamma)$, be the annihilator polynomial
of $V$. Then $A(X)$ defines a $K$-linear map $L\to L$ 
vanishing on $V$, and is therefore constant on cosets of $V$. 

\begin{lemma}\label{lemma:subspace}
Let $K\subseteq L$ be finite fields. Let $V$ be a $K$-sub-vector space
of $L$. Let $f(X)\in L[X]$ be the polynomial 
$$
f(X)=\prod_{\gamma \in L\smallsetminus V} (X-\gamma).
$$
Then $f$ is constant on $V$, i.e., $f(x)=f(y)$ for any $x,y\in V$.
\end{lemma}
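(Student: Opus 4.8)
The plan is to prove the sharper statement that $f$ is not merely constant on $V$ but takes a single explicit value there, by exploiting the additive (translation) symmetry of the set $L\smallsetminus V$. Fix $x,y\in V$; the goal is to show $f(x)=f(y)$.

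The key observation is that since $V$ is a $K$-subspace, it is in particular an additive subgroup of $L$, so the difference $y-x$ lies in $V$. Translation $\tau\colon \gamma\mapsto \gamma+(y-x)$ is then a bijection of $L$ that maps $V$ onto $V$, and therefore maps the complement $L\smallsetminus V$ bijectively onto itself. First I would reindex the product defining $f(y)=\prod_{\gamma\in L\smallsetminus V}(y-\gamma)$ by substituting $\gamma=\delta+(y-x)$ with $\delta$ again ranging over $L\smallsetminus V$; each factor becomes $y-\delta-(y-x)=x-\delta$, so the product collapses to $\prod_{\delta\in L\smallsetminus V}(x-\delta)=f(x)$. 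This yields $f(x)=f(y)$ at once.

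An alternative route, which also pins down the constant value, uses the annihilator polynomial $A(X)=\prod_{\gamma\in V}(X-\gamma)$ recalled above. Since the roots of $X^{|L|}-X$ are exactly the elements of $L$, one has the factorization $X^{|L|}-X=A(X)f(X)$. Differentiating and using that $|L|$ is a power of $p=\charac K$ gives $A'(X)f(X)+A(X)f'(X)=-1$; because $A$ is a $K$-linearized (additive) polynomial, its derivative $A'$ is a nonzero constant $a_0$, while $A$ vanishes on $V$, so evaluating at any $x\in V$ gives $a_0 f(x)=-1$, i.e.\ $f(x)=-a_0^{-1}$ on all of $V$.

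I expect no serious obstacle: the only point needing a word of justification in the first argument is that $\tau$ restricts to a bijection of $L\smallsetminus V$, which is immediate from $V$ being closed under addition of its own elements (if $\gamma\notin V$ then $\gamma+(y-x)\notin V$). The derivative variant instead hinges on the fact, standard for additive polynomials in characteristic $p$, that all terms $X^{p^i}$ with $i\ge 1$ contribute $0$ to $A'$, leaving only the coefficient of the linear term, with separability of $A$ (distinct roots) guaranteeing $a_0\neq 0$. I would present the translation argument as the main proof, since it is the shortest and avoids invoking the linearized structure at all.
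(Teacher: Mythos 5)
Your proposal is correct, and your main (translation) argument takes a genuinely different route from the paper's. The paper factors $f$ as the product of the annihilator polynomials of all nontrivial cosets of $V$: by a degree argument each such coset annihilator equals $A(X)+A(\gamma)$, where $A$ is the ($K$-linearized) annihilator of $V$, and each factor is constant on $V$; hence so is $f$. Your reindexing argument instead uses only the additive symmetry of $L\smallsetminus V$ under translation by $y-x\in V$, which makes it shorter, entirely elementary, and slightly more general (it needs only that $V$ is an additive subgroup of $L$, i.e.\ an $\ef_p$-subspace, not the full $K$-linear structure, and it never mentions linearized polynomials). What the paper's route buys in exchange is thematic continuity: the coset-annihilator viewpoint it sets up here is exactly the machinery reused in Lemma 3.2 and in the down-conversion theorem, where $f(X)=g(A(X))$ and constancy on cosets are the central objects. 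Your derivative variant is also correct and adds something neither proof states: it pins down the constant explicitly as $-a_0^{-1}$, the (negated) inverse of the linear coefficient of $A$, which identifies concretely the nonzero constant $c'$ that appears (and must be divided out) in the proof of Theorem 3.4. One small remark: separability of $A$ indeed forces $A'=a_0\neq 0$, and an even quicker way to see it within your own argument is that $A'\equiv 0$ together with $A'f+Af'=-1$ would force the positive-degree polynomial $A$ to divide a nonzero constant.
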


\begin{proof}
Let $A(X)$ be the annihilator polynomial of $V$. A degree
argument shows that for all $\gamma\in L$, $A(X)-A(\gamma)$ is the
annihilator polynomial of the coset $V+\gamma$. Clearly,
$A(X)-A(\gamma)$ is constant on cosets of $V$ (as $A(X)$ is). Since
$f(X)$ is the product of the annihilator polynomials of all cosets of
$V$ (except $V$ itself), and each factor is constant on $V$, so is $f$.
\end{proof}

\begin{lemma}\label{lemma:pol}
Let $K\subseteq L$ be finite fields. Let $V$ be a $K$-sub-vector space
of $L$ of dimension $s$, and let $A(X)$ be the annihilator polynomial of
$V$. Suppose that a  polynomial $f(X)\in L[X]$ with 
$\deg(f(X))\leq |L|-1$ takes constant values on cosets of $V$ in
$L$, that is, $f(a+v)=f(a)$ for all $v\in V$ and $a\in L$. Then there
exists $g(X)\in L[X]$ such that $f(X)=g(A(X))$, and in particular,
$\deg(f)=|K|^s\deg(g)$. 
\end{lemma}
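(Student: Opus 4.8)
Let me understand what we need to prove. We have finite fields $K \subseteq L$, a $K$-subspace $V$ of dimension $s$, and $A(X)$ the annihilator polynomial $\prod_{\gamma \in V}(X - \gamma)$, which is $K$-linearized of degree $|K|^s = |V|$. We have $f(X) \in L[X]$ with $\deg f \leq |L| - 1$ that is constant on cosets of $V$. We want $g(X) \in L[X]$ with $f(X) = g(A(X))$, and consequently $\deg f = |K|^s \deg g$.

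**The approach.** The key observation is that $A$ is $K$-linear as a function $L \to L$, with kernel $V$. So $A$ induces an injection (hence bijection onto its image) from the quotient $L/V$ onto the image $A(L)$. The value $A(a)$ determines the coset $a + V$, and since $f$ is constant on cosets, $f(a)$ is a well-defined function of $A(a)$. This tells us $f$ factors set-theoretically through $A$; the content of the lemma is that this factorization is realized by a *polynomial* $g$, with the expected degree.

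Here is the plan for the clean version. Since $A$ is a $K$-linearized polynomial of degree $|V|$, its image $W := A(L)$ is a $K$-subspace of $L$ of dimension $\dim_K L - s$, and $A \colon L \to W$ is surjective with fibers exactly the cosets of $V$. For each $w \in W$ pick any preimage $a_w$ (with $A(a_w) = w$), and set $c_w := f(a_w)$; by the cosets-constant hypothesis this is independent of the choice of $a_w$. Now let $g(X) \in L[X]$ be the unique polynomial of degree $\leq |W| - 1$ interpolating the data $g(w) = c_w$ for all $w \in W$ — this exists by Lagrange interpolation over the field $L$ on the $|W|$ distinct points of $W$. Then $g(A(X))$ and $f(X)$ agree as functions on all of $L$: for every $a \in L$, $g(A(a)) = c_{A(a)} = f(a)$.

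**The main obstacle** is to pass from "equal as functions on $L$" to "equal as polynomials of degree $\leq |L| - 1$," so that the degree formula $\deg f = |K|^s \deg g$ can be read off. Two polynomials over $L$ that agree on all $|L|$ points of $L$ differ by a multiple of $X^{|L|} - X$, so they are equal only if both have degree $\leq |L| - 1$. We are given $\deg f \leq |L| - 1$. For the composite, note $\deg\big(g(A(X))\big) = \deg(A) \cdot \deg(g) = |K|^s \deg(g) \leq |K|^s(|W| - 1) = |V| \cdot (|L|/|V| - 1) = |L| - |V| \leq |L| - 1$, using $|W| = |L|/|V|$. Hence both $f$ and $g \circ A$ have degree $\leq |L|-1$ and agree on $L$, forcing $f(X) = g(A(X))$ as polynomials. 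The degree identity $\deg f = |K|^s \deg g$ then follows immediately since $A$ is $K$-linearized of degree $|K|^s$, and composition multiplies degrees.

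I would present this as: first record that $A$ is surjective onto a subspace $W$ with fibers the $V$-cosets; second, define $g$ by interpolation on $W$ so that $g \circ A = f$ pointwise on $L$; third, bound $\deg(g \circ A) \leq |L| - 1$ and invoke uniqueness of the degree-$(\leq |L|-1)$ representative to conclude polynomial equality, yielding the degree formula. A slightly slicker alternative, avoiding interpolation, is to argue directly on monomials: since $f$ is constant on $V$-cosets it lies in the subring of functions pulled back from $L/V$, and one can perform polynomial division of $f$ by $A$ repeatedly (writing $f = q_0 + q_1 A + q_2 A^2 + \cdots$ with $\deg q_i < \deg A$) and show each remainder $q_i$ must be constant — but the interpolation argument is cleaner and makes the degree count transparent, so that is the route I would take.
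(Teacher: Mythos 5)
Your proof is correct, but it takes a genuinely different route from the paper's. The paper argues by counting: it sets $S_1$ to be the set of polynomials of degree $\leq |L|-1$ that are constant on cosets of $V$, and $S_2:=\{g(A(X))\mid g\in L[X],\ \deg(g)<|L|/|V|\}$, notes $S_2\subseteq S_1$, and then computes $|S_1|=|S_2|=|L|^{|L|/|V|}$ (the first by identifying such polynomials with arbitrary functions from the set of cosets to $L$, the second by injectivity of $g\mapsto g\circ A$ on low-degree $g$), concluding $S_1=S_2$. This is short and purely existential. You instead construct $g$ explicitly: you identify the image $W=A(L)$ as a $K$-subspace of size $|L|/|V|$ whose fibers under $A$ are exactly the $V$-cosets, define $g$ by Lagrange interpolation on $W$ so that $g\circ A$ and $f$ agree as functions on $L$, bound $\deg(g\circ A)\leq |L|-|V|\leq |L|-1$, and invoke uniqueness of the degree-$(\leq|L|-1)$ representative of a function $L\to L$ to upgrade pointwise equality to polynomial identity. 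Both proofs rest on the same dictionary between polynomials of degree $\leq |L|-1$ and functions $L\to L$, but your version buys constructiveness --- it gives an explicit recipe for $g$ (in the spirit of the paper's algorithmic goals, although Remark \ref{rem:dconversion} shows $g$ is never actually needed), and it makes the degree formula $\deg(f)=|K|^s\deg(g)$ transparent, whereas in the paper's proof that formula is read off only after the set equality. The paper's version, in exchange, is a few lines shorter and sidesteps the interpolation step entirely.
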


\begin{proof}
Let $S_1$ be the set of polynomials of degree $\leq |L|-1$ that are
constant on cosets of $V$, and let $S_2:=\big\{g(A(X))|g\in L[X],
\deg(g)<|L|/|V|\big\}$. Clearly, $S_2\subseteq S_1$. Since the number of
cosets of $V$ in $L$ is $|L|/|V|$, and $|S_1|$ is the number of
functions $\{\text{cosets of }V\}\to L$, we have
$|S_1|=|L|^{\frac{|L|}{|V|}}$. On
the other hand, $|S_2|$ equals the number of distinct choices of $g$
in the definition of $S_2$, which is also $|L|^{\frac{|L|}{|V|}}$
(note that a polynomial of degree up to $|L|/|V|-1$ has $|L|/|V|$ free
coefficients, and if $g_1(A(X))=g_2(A(X))$, then $g_1=g_2$, as
$|\imag(A)|=|L|/|V|>\deg(g_1-g_2)$). Hence $S_1=S_2$, and the proof is
complete. 
\end{proof}

We are now in a position to give a proof of the down-conversion
theorem that works
directly with the evaluated polynomials from $\pdone$.\footnote{Note
that our definition of $\pd$ is different from that of \cite{KaLi72}.}

\begin{proof}[Proof of Theorem \ref{thm:dconversion}]
The existence of the polynomial $g$ follows
from Lemma \ref{lemma:pol}, and $\deg(g)=\deg(f)/2^s$.
Recalling that $A(X)$ defines an $\eftwo$-linear map $\eftwom\to
\eftwom$ with kernel $V$, it is clear that $A(\eftwom)$ is an
$\eftwo$-subspace of $\eftwom$, of dimension $m-s$. Consider first the
polynomial  
$$
\tl{h}(X):=g(X)\cdot \prod_{\gamma\in\eftwom \smallsetminus
A(\eftwom)}(X-\gamma). 
$$
For all $y\in \eftwom\smallsetminus A(\eftwom)$, we obviously have
$\tl{h}(y)=0$. By Lemma \ref{lemma:subspace}, there exists some
$c'\neq 0$ such  that for all $y\in A(\eftwom)$, $\prod_{\gamma\in\eftwom
\smallsetminus A(\eftwom)}(y-\gamma)=c'$. So, for all $x\in \eftwom$, 
$$
\tl{h}(A(x))=c'\cdot g(A(x))=c'\cdot f(x)
$$
is either $c'$ or $0$. Letting $h(X):=\tl{h}(X)/c'$, we see
that $h(y)$ equals $0$ or $1$ for all $y\in A(\eftwom)$, and $0$ for
all $y\notin A(\eftwom)$. Also, since $\deg(g)=\deg(f)/2^s$, 
\begin{multline*}
\deg(h) =  \deg(f)/2^s+(2^m-2^{m-s}) \leq \\
(2^m-d)/2^s+(2^m-2^{m-s})=2^m-\frac{d}{2^s}.
\end{multline*}
Hence,
$$
\deg(h)\leq \Big\lfloor 2^m-\frac{d}{2^s} \Big\rfloor =
n+1-\Big\lceil\frac{d}{2^s} 
\Big\rceil, 
$$
so that $h\in P_{\lceil d/2^s\rceil -1}$.

Next, we have to consider $\wh(\boldh)$. For $y\in A(\eftwom)$, say,
$y=A(x)$ for some $x\in \eftwom$, we have 
\begin{equation}\label{eq:h}
h(y)=1\iff h(A(x))=1\iff f(x)=1.
\end{equation}
There are $\wh(\boldf)$ choices of $x$ with $f(x)=1$, and, by the assumption that 
$f$ is constant on the cosets of $V$, $2^s|\wh(\boldf)$, and the above
choices of $x$ form a union of $\wh(\boldf)/2^s$ cosets of
$V=\ker(A)$. The linear map $A$ takes 
distinct values on distinct cosets of its kernel, and so there are
exactly $\wh(\boldf)/2^s$ distinct values of $y\in A(\eftwom)$ for
which $h(y)=1$ (and, of course, for all $y\notin A(\eftwom)$,
$h(y)=0$).  

Finally, it follows from (\ref{eq:h}) that if $S\subseteq \eftwom$ is the
support of $\boldf$, then the support of $\boldh$ is $A(S)$. 
\end{proof}

\subsection{Proof of the up-conversion theorem}\label{app:proofuc}
While the proof of the up-conversion theorem in \cite{PeWe} is based
on properties of the annihilator of the support of BCH codewords, the
proof below works directly with the evaluated polynomials. 

\begin{proof}[Proof of Theorem \ref{thm:uconversion}]
Let $c\neq 0$ be the constant value that
$\prod_{\gamma\in \eftwom\smallsetminus U}(X-\gamma)$ takes on $U$ (by
Lemma \ref{lemma:subspace}).
By the definition of $g(X)$, the evaluation vector of $f(X)$ equals
that of $h(X):=\frac{1}{c}\cdot g(X)\cdot\prod_{\gamma\in\eftwom\smallsetminus
U}(X-\gamma)$. Also, since by construction $\deg(g)\leq |U|-1$, we have
$\deg(h)\leq 2^m-1$, so that $h(X)=f(X)\in \pdvar{d-1}$. Hence,
$\deg(g)+2^m-|U|=\deg(h)\leq 
n-(d-1)=2^m-d$, from which it follows that $\deg(g)\leq |U|-d$. Hence
$$
\deg(g(B(X)))\leq \frac{2^m}{|U|}(|U|-d)=2^m-d\cdot
\frac{2^m}{|U|}.
$$
As $g(X)$ and $f(X)$ define the same function on $U$, it
holds that for all $x\in \eftwom$, $g(B(x))\in \eftwo$, so that
$g(B(X))\in \pdvar{d\cdot\frac{2^m}{|U|}-1}$.

Finally, if $S\subseteq U$ is the support of $\boldf$, then the
support of the evaluation vector of $g(B(X))$ is $B^{-1}(S)$ (again,
since $g$ and $f$ define the same function on $U$) and
$|B^{-1}(S)|=|S|\cdot |\ker(B)|=\wh(\boldf)\cdot \frac{2^m}{|U|}$
(since $S$ is contained in the image $U$ of $x\mapsto B(x)$).   
\end{proof}

\section{The constructed words are not affine generators for weight
$>6$}\label{app:affine}
In Appendix \ref{app:gk}, we will see that  for designed distance $6$
(i.e., for $i=2$ and $s=m-4$), it follows from \cite{GK12}
that it is possible to 
choose a solution for the system of equations such that the resulting
minimum-weight codeword is an affine generator.  However, the
following proposition shows that for all other cases,  the constructed
minimum-weight codeword is not an affine generator, as it is in an
affine-invariant proper subcode of the extended BCH code in
question. We note that in this section, we use freely some facts on
cyclic codes, as well as the definition of BCH codes by their zeros
as cyclic codes, as appearing, e.g., in \cite[Ch.~7--9]{MacSl}.

\begin{proposition}\label{prop:not}
Maintaining the terminology of Section \ref{sec:usingdct}, suppose that
the condition $(i=2\text{ and } s=m-2i)$ does {\bf not} hold. Then for
fixed $i$ and all sufficiently large $m$,
$\bs{c}(i,s)\in\ebbch{2^{m-1-s}-2^{m-1-i-s}}$ is {\bf not} an affine 
generator.  
\end{proposition}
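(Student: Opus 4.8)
The plan is to show that $\bs c(i,s)$ lies in a \emph{proper} affine-invariant subcode of $\ebbch{d}$, where $d:=2^{m-1-s}-2^{m-1-i-s}$; since the affine group preserves such a subcode, the orbit of $\bs c(i,s)$, and hence its $\eftwo$-span, stays inside the subcode and cannot equal the whole code, so $\bs c(i,s)$ is not an affine generator.

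First I would pin down the subcode. By Remark~\ref{rem:bool}, $\bs c(i,s)$ is the evaluation vector of a Boolean function of degree $s+2$ (namely $(x_1x_2+\cdots+x_{2i-1}x_{2i})(1+x_{2i+1})\cdots(1+x_{2i+s})$ in suitable coordinates), hence $\bs c(i,s)\in\RM(s+2,m)$. So, after identifying $\eftwom$ with $\eftwo^m$ via a basis, $\bs c(i,s)$ lies in $C:=\RM(s+2,m)\cap\ebbch{d}$. Both $\RM(s+2,m)$ and $\ebbch{d}$ are affine-invariant, hence so is $C$. It remains to prove that $C$ is \emph{proper} in $\ebbch{d}$, i.e.\ that $\ebbch{d}$ contains a codeword of Boolean degree $>s+2$.

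To produce such a codeword I would use trace monomials. For $\gamma\in\eftwom^*$ the function $x\mapsto\tr(\gamma x^{j})$ has Boolean degree equal to the binary weight $\wh(j)$, and its minimal-degree representative is supported on the cyclotomic coset of $j$; thus it lies in $\ebbch{d}$ as soon as the largest cyclic rotation $M(j)$ of the $m$-bit word $j$ satisfies $M(j)\le 2^m-d$ (so that this representative has degree $\le 2^m-d$). A direct computation gives the binary expansion $2^m-d=1^{s+1}\,0^{i-1}\,1\,0^{m-i-s-1}$ (most significant bit first), of weight $s+2$. Hence it suffices to exhibit $j$ with $\wh(j)\ge s+3$ and $M(j)\le 2^m-d$.

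For $s\ge 1$ I would take for $j$ a cyclic binary word consisting of runs of ones each of length $\le s$, separated by single zeros, with total weight $s+3$: any rotation that begins with a $1$ then begins with at most $s$ ones followed by a $0$, so it is smaller than $2^m-d=1^{s+1}\cdots$, while rotations beginning with $0$ are below $2^{m-1}<2^m-d$; such a $j$ exists once $(m-s-3)s\ge s+3$, which holds for large $m$ in every admissible case except $i=2,\ s=m-4$. For $s=0$ I would instead take three ones with all cyclic gaps $\ge i$ (possible once $m\ge 3i+3$), forcing $M(j)<2^m-d=1\,0^{i-1}\,1\,0^{m-i-1}$ and giving degree $3>2$. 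The delicate point — and the reason for the single exception — is the case $i=2,\ s=m-4$ (designed distance $6$): there $2^m-d=2^m-6$ already has weight $m-2=s+2$, and the only weight-$(m-1)$ cyclic class is that of $1^{m-1}0$, whose maximal rotation is $2^m-2>2^m-6$, so no codeword of degree $>s+2$ exists and $C=\ebbch{6}$. Confirming that this boundary is exactly the excluded case, and that the run-length construction stays feasible uniformly in $s$ for all other $(i,s)$ as $m\to\infty$, is the crux of the proof.
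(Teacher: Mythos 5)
Your proposal is correct and follows the paper's skeleton for its first half, but proves the key properness claim by a genuinely different (complementary) route. Like the paper, you place $\bs{c}(i,s)$ in the affine-invariant code $C:=\RM(s+2,m)\cap\ebbch{d}$ via Remark~\ref{rem:bool} and reduce everything to showing $C\subsetneq\ebbch{d}$. The paper then passes to the punctured cyclic codes and exhibits an \emph{extra zero} of $C^*$: an exponent $j$ with $\wt_2(j)\le m-s-3$, $j>d^*$, and $j$ minimal in its cyclotomic coset, so that $C^*$ has strictly more zeros than $\bch(d^*)$. You instead exhibit an explicit \emph{codeword} of $\ebbch{d}$ outside $\RM(s+2,m)$, namely a trace monomial $x\mapsto\tr(\gamma x^j)$ with $\wt_2(j)\ge s+3$ and maximal coset element $M(j)\le 2^m-d$; under bit-complementation $j\mapsto 2^m-1-j$ your two conditions become exactly the paper's weight bound and coset-minimality condition, so the two combinatorial problems are equivalent, though your run-length pattern (runs of ones of length $\le s$ separated by single zeros) is not the complement of the paper's pattern and the verification arguments differ. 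What your route buys: the feasibility criterion $(m-s-3)s\ge s+3$ treats all admissible $(i,s)$ with $s\ge 1$ uniformly, and your separate $s=0$ gadget (three ones with cyclic gaps $\ge i$) covers that case explicitly, whereas the paper splits off the cases $s=0$, $(i,s)=(3,m-6)$, and $(i,s)=(2,m-5)$ and omits the details; your analysis also makes transparent why $(i,s)=(2,m-4)$ is exactly the boundary where the method must fail. What the paper's route buys: it stays entirely at the level of zero sets of cyclic codes and never needs properties of trace representations. On that point, one small repair is needed in your write-up: the claim that $\tr(\gamma x^j)$ has Boolean degree exactly $\wt_2(j)$ for \emph{every} $\gamma\in\eftwom^*$ fails when the cyclotomic coset of $j$ is degenerate (of size $m'<m$), since then the function equals $\trv{m'}{1}\big(\trv{m}{m'}(\gamma)\,x^j\big)$ and can even vanish identically; either choose $\gamma$ with $\trv{m}{m'}(\gamma)\ne 0$, or observe that your $j$ can always be arranged to have a unique longest run of zeros, hence an aperiodic pattern and a full-size coset. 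With that fix the argument is complete.
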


\begin{proof}
It follows from Remark \ref{rem:bool} that\footnote{This is also noted
in \cite[Coro.~1]{KaLi72}.}  
\begin{multline*}
\bs{c}(i,s)\in C:=\\ 
\RM(2+s,m)\cap \ebch(2^{m-1-s}-2^{m-1-i-s}).
\end{multline*}
As $C$ is the intersection of
affine-invariant codes, it is affine invariant. To show that
$\bs{c}(i,s)$ is not an affine generator, it is sufficient to show
that $C$ is a proper subcode of $\ebch(2^{m-1-s}-2^{m-1-i-s})$. 

Note that the punctured cyclic code obtained from $C$ inherits all
zeros of both intersected punctured cyclic codes.  
For a primitive $\alpha\in \eftwom^*$ whose consecutive
powers define the evaluation points, the set of zeros of the punctured
Reed--Muller code $\RM(2+s,m)^*$ is $\{\alpha^j|1\leq j\leq
2^m-2, \wt_2(j)\leq m-s-3\}$ 
(where $\wt_2(\cdot)$ is the number of $1$'s in the binary expansion;
see, e.g., \cite[Sec.~13.5]{MacSl}), and this is true for any choice
of basis used for identifying $\eftwom$ with $\eftwo^m$. 

For $i=2$ and $s=m-2i=m-4$, $\RM(2+s,m)=\RM(m-2,m)=\ebch(4)$ is just
the extended Hamming code, which includes all extended BCH
codes. Hence $C=\ebch(2^{m-1-s}-2^{m-1-i-s})$, and it is not
automatically ruled out that $\bs{c}(i,s)$ is an affine
generator.\footnote{Indeed, the affine generator of \cite{GK12} is of
form $\bs{c}(2,m-4)$ for some choice of basis, as shown in Appendix
\ref{app:gk}.} 

However, for all other choices of $s$ and $i$, we claim
that $\RM(2+s,m)\cap \ebch(2^{m-1-s}-2^{m-1-i-s})$
is a proper subcode of $\ebch(2^{m-1-s}-2^{m-1-i-s})$. To see this, we
will specify $j$ with $\wt_2(j)\leq m-s-3$ such that (1) $j>
d^*:=2^{m-1-s}-2^{m-1-i-s}-1$ (the designed distance of the punctured
BCH code), and (2) $j$ is minimal in its {\it cyclotomic 
coset} modulo $2^m-1$ (see, e.g., \cite[Ch.~4]{MacSl}).\footnote{The
cyclotomic coset modulo $n=2^m-1$ of an integer $i$ is defined as the
set $\{i\bmod n ,2i\bmod n,\ldots,2^{m-1}i\bmod n \}$. It is a
convenient way to represent the orbit of $\alpha^i$ (for a primitive
$\alpha$) under the action of the Galois group of $\eftwom/\eftwo$.}
This will show that the cyclic punctured code $C^*$ obtained from $C$
has more zeros than $\bch(d^*)$, so that $C^*$ is a
proper subcode. 

Assume first that $s\geq 1$, and that if $i=3$, $s\leq
m-2i-1=m-7$ (excluding the case $s=m-2i$), and if $i=2$, $s\leq
m-2i-2=m-6$ (excluding the case\footnote{The case $s=m-2i$ is excluded
by the proposition's hypothesis for $i=2$.} $s=m-2i-1$). For $s\geq 1$,
the binary expansion 
of $d^*$ (LSB first) is given in Table \ref{table:expa}.
\begin{table*}
$$
\begin{array}{cccccccccccccc}
\mbox{\scriptsize $0$}  & \mbox{\scriptsize $1$} & \mbox{\ldots} &
\mbox{\scriptsize $\begin{matrix}m-4-\\i-s\end{matrix}$}&
\mbox{\scriptsize $\begin{matrix}m-3-\\i-s\end{matrix}$}
&\mbox{\scriptsize $\begin{matrix}m-2-\\i-s\end{matrix}$}& 
\mbox{\scriptsize $\begin{matrix}m-1-\\i-s\end{matrix}$} & \mbox{\scriptsize $m-i-s$} & \ldots &
\mbox{\scriptsize $m-3-s$} & \mbox{\scriptsize $m-2-s$} &
\mbox{\scriptsize $m-1-s$} & \ldots & \mbox{\scriptsize $m-1$}\\ 
(1&1& \ldots & 1 & 1 &1&0 & 1&\ldots&1&1&0&\ldots& 0).
\end{array}
$$ \caption{Binary expansion of $d^*$ in the proof of Proposition
\ref{prop:not}.}
\label{table:expa}
\end{table*} 

Consider the number $j$ with the binary expansion (LSB
first) in Table \ref{table:expb}.
\begin{table*}
$$
\begin{array}{cccccccccccccc}
\mbox{\scriptsize $0$}  & \mbox{\scriptsize $1$} & \mbox{\ldots} &
\mbox{\scriptsize $\begin{matrix}m-4-\\i-s\end{matrix}$}&
\mbox{\scriptsize $\begin{matrix}m-3-\\i-s\end{matrix}$}
&\mbox{\scriptsize $\begin{matrix}m-2-\\i-s\end{matrix}$}& 
\mbox{\scriptsize $\begin{matrix}m-1-\\i-s\end{matrix}$} & \mbox{\scriptsize $m-i-s$} & \ldots &
\mbox{\scriptsize $m-3-s$} & \mbox{\scriptsize $m-2-s$} &
\mbox{\scriptsize $m-1-s$} & \ldots & \mbox{\scriptsize $m-1$}\\ 
(1&1& \ldots & 1 & 0 &0&1 & 1&\ldots&1&1&0&\ldots& 0).
\end{array}
$$ \caption{Binary expansion of $j$ in the proof of Proposition
\ref{prop:not}.}
\label{table:expb}
\end{table*} 
Note that in this binary expansion, it was assumed that
$m-4-i-s\geq 0$, that is, that $s\leq m-4-i$, which follows from our
assumptions (and from $s\leq m-2i$). Evidently,
$\wt_2(j)=m-s-3$, and  $j>d^*$. 
The run of zeros ending at the MSB of $j$ consists of $s+1$
zeros. If $s\geq 2$, this is a run of at least $3$ zeros, and
therefore the unique longest cyclic run of zeros in $j$. As the unique
longest run of zeros is aligned with the MSB, $j$ is minimal in
its cyclotomic coset. If $s=1$, there are two maximal runs of zeros,
both of length $2$, and it can be verified directly that $j$ is
minimal in its cyclotomic coset for $i$ fixed and $m$ large
enough. This completes the proof for the considered case. 

In all other cases,\footnote{In detail, the missing cases are:
1.~$i=3,s=m-2i=m-6$, 2.~$i=2$ and $s=m-2i-1=m-5$, and 3.~$s=0$.} it is
straightforward to specify $j$ that satisfies all conditions for large
enough $m$, and we omit the details. 
\end{proof}

\section{The Grigorescu--Kaufman result in the current
context}\label{app:gk}
In \cite{GK12}, an explicit support for a minimum-weight
codeword of $\ebbch{6}$ is given as
$X:=\{0,1,1+y^4, y+y^2+y^4,y^2+y^3+y^4,y+y^3+y^4\}$, where $y\in
\eftwom$, $m\geq 4$, is ``almost arbitrary'' -- in order to have
$|X|=6$, the specified elements of $X$ have to be distinct, so that,
e.g., $y\in \ef_4$ (when $m$ is even) is not
allowed. To have $|X|=6$, it is clearly sufficient that
$1,\ldots,y^4$ are linearly independent over $\eftwo$, or,
equivalently, that $y$ is not in a subfield of size $\leq 16$
(excluding at most $|\ef_8\cup\ef_{16}|=16+8-2=22$ elements). It was
also proved in \cite{GK12} that when $y$ is drawn uniformly from
$\eftwom$, $X$ supports an affine generator with probability
$1-2^{-O(m)}$. 

Next, we would like to show that for an appropriate choice of basis
and affine transformation, the specified codeword from \cite{GK12}
can be described as the evaluation vector of the Boolean function 
$(x_1,\ldots, x_m)\mapsto (x_1x_2+x_3x_4)(1+x_5)\cdots (1+x_m)$, for
$m\geq 5$. Moreover, we will show how a solution for (\ref{eq:i2}) with
independent coordinates can be extracted from the codeword of \cite{GK12}.

First, if we let    
$$
M:=\begin{pmatrix}
0 & 1& 1 & 0 & 0 & 0\\
0 & 0 & 0 & 1 & 0 & 1\\
0 & 0 & 0 &1 & 1& 0\\
0& 0 & 0 & 0 & 1 & 1\\
0& 0 & 1 & 1 & 1 & 1
\end{pmatrix},
$$
then the vector $\bs{v}:=(1,y,y^2,y^3,y^4)\cdot M$
has the elements of $X$ as its entries. 
Let 
$$
A:=\begin{pmatrix}
0& 0 & 1 & 1 & 0\\
0& 1 & 0 & 1 & 0\\
0& 0 & 0 & 0 & 1\\
1& 0 & 0 & 0 & 1\\
0& 1 & 1 & 1 & 0
\end{pmatrix}
$$
and put
$\bs{b}:=(1,1,0,0,0)^t$.
Let $N\in \eftwo^{5\times 6}$ be a matrix whose columns are the
support of the function $\eftwo^5\to \eftwo$ defined by
$(x_1,\ldots,x_5)\mapsto (x_1x_2+x_3x_4)(1+x_5)$,
say,\footnote{Note that such a matrix is determined up to a column
permutation.} 
$$
N:=\begin{pmatrix}
1 & 1 & 1 & 0 & 1 & 0\\
1 & 1 & 1 & 0 & 0 & 1\\
0& 0 & 1 & 1 & 1 & 1\\
0 & 1 & 0 & 1 & 1& 1\\
0 & 0 & 0 & 0 & 0 & 0
\end{pmatrix} 
$$
A direct calculation shows that
$N=AM+(\bs{b},\cdots,\bs{b})$,\footnote{
In fact, $A$ and $\bs{b}$ were \emph{found} by solving this linear
equation in the entries of $A$ and $\bs{b}$. There is a total of
$25+5=30$ such entries, and also $30$ equations (entries of
$N$). However, it can be verified that the involved linear transformation
of the entries of $A$ and $\bs{b}$ is not invertible, so that the
existence of a solution for the specific $N$ is not trivial, let alone
a solution for which $A$ is invertible.} 
and that $A$ is invertible with inverse
$$
A^{-1}=\begin{pmatrix}
0 & 0 &1 & 1 & 0 \\
1& 0 & 0 & 0 & 1\\
0& 1 & 0 & 0 & 1\\
1& 1 &0 & 0 & 1\\
0& 0 &1 & 0 & 0
\end{pmatrix}
$$
Hence, for $m=5$, the coordinate permutation $\bs{x}\mapsto
A\bs{x}+\bs{b}\in \mathrm{GA}(5,2)$ (where $\bs{x}$ is a column
vector) maps\footnote{Here, $\mathrm{GA}(m,q)$ is the {\it 
 general affine group}, see, e.g., \cite[Sec.~13.9]{MacSl}.} the
vector supported on $X$ to the evaluation vector of
$(x_1x_2+x_3x_4)(1+x_5)$ (see ahead for more details).  

More generally, for $m\geq 5$, we may take 
$$
A_m:=\begin{pmatrix}
A \\
 & I_{m-5}\\
\end{pmatrix} \in\eftwo^{m\times m} 
$$
(where $I_d$ is the $d\times d$ identity matrix and blanks stand for
zeros), and $\bs{b}_m:=(1,1,0,\ldots 0)^t\in\eftwo^m$ to obtain that
the coordinate permutation $\bs{x}\mapsto A_m\bs{x}+\bs{b}_m\in
\mathrm{GA}(m,2)$ maps the vector supported on $X$ to
the evaluation vector of
$(x_1,\cdots,x_m)\mapsto(x_1x_2+x_3x_4)(1+x_5)\cdots (1+x_m)$, as we
shall now explain.  

We note that $A_m^{-1}$ can be
written as
$$
A_m^{-1}=\begin{pmatrix}
A^{-1}\\
& I_{m-5}
\end{pmatrix}.
$$
Let $\bs{y}^+_1\in
\eftwo^{m-5}$ be some extension of $\bs{y}:=(1,y,\ldots,y^4)$ to a
basis for $\eftwom/\eftwo$, and
write $\bs{y}^+:=(\bs{y},\bs{y}^+_1)$.\footnote{With a slight abuse of
notation, we write $(\bs{y},\bs{y}^+_1)$ for the vector obtained by
concatenating the two involved vectors.}
When working with the basis
$\bs{y}^+A_m^{-1}=(\bs{y}A^{-1},\bs{y}^+_1)$, the coordinate
$A_m\bs{x}+\bs{b}_m\in\eftwo^m$ corresponds to the field element
\begin{equation}\label{eq:interp}
\bs{y}^+A_m^{-1}\cdot(A_m\bs{x}+\bs{b}_m) =
\bs{y}^+\bs{x}+\bs{y}^+A_m^{-1}\bs{b}_m. 
\end{equation}
The first summand $\bs{y}^+\bs{x}$ is the field element
corresponding to $\bs{x}$ according to the basis $\bs{y}^+$. By taking
$\bs{x}$'s of the form $(\bs{x}_0^t,0,...,0)^t$ where $\bs{x}_0$ runs
on the columns of $M$, we conclude that if we take the support
specified in \cite{GK12}, apply the translation $z\mapsto
z+\bs{y}^+A_m^{-1}\bs{b}_m$ and then interpret the resulting support
according to the basis $\bs{y}^+A_m^{-1}$, we obtain the support of
$(x_1,\cdots,x_m)\mapsto(x_1x_2+x_3x_4)(1+x_5)\cdots (1+x_m)$. Note
that the first translation is within the automorphism group of
extended BCH codes, and obviously does not change the affine orbit of the
corresponding vector. 

We now turn to finding solutions to 
(\ref{eq:i2}) using \cite{GK12}, and explain how the basis
$\bs{y}^+A_m^{-1}$ for which the word of \cite{GK12} is the evaluation vector of
$(x_1,\cdots,x_m)\mapsto(x_1x_2+x_3x_4)(1+x_5)\cdots (1+x_m)$ can be
used to find such solutions.

Let $\bs{d}=(d_1,\ldots,d_m):=\bs{y}^+A_m^{-1}$,  write
$$
S:=\bs{d}\cdot \begin{pmatrix}N\\ 0_{(m-5)\times 6} \end{pmatrix}
$$
(where $0_{\ell_1\times \ell_2}$ is the $\ell_1\times \ell_2$ all-zero
matrix, for $\ell_1,\ell_2\in \bbNp$), 
and let $\bs{c}$ be the codeword of $\ebch(6)$ supported on the
entries of the vector $S$ (which exists, by (\ref{eq:interp})).
Letting $U:=\linspan_{\eftwo}(S)$, it holds that $\dim(U)=4$, since
$N$ has rank $4$.  Since also $U\subseteq
\linspan_{\eftwo}(d_1,\ldots,d_4)$ by the definition of $S$ and $N$,
it follows that $U=\linspan_{\eftwo}(d_1,\ldots,d_4)$. Using the
terminology of the up-conversion theorem (Theorem
\ref{thm:uconversion}), it following from the definition of $g(X)$ in
the theorem, that $g(x_1d_1+\cdots+x_4d_4)=x_1x_2+x_3x_4$. Also, there
exists some full-rank matrix $V\in\eftwo^{m\times 4}$ such that the
image polynomial $B(X)$ satisfies $B(x_1d_1+\cdots+x_md_m) =
\bs{x}V(d_1,\cdots,d_4)^t$ (where here $\bx$ is the row vector
$(x_1,\ldots,x_m)$).\footnote{The existence of some matrix $V$
is clear from the fact that the image is in
$U=\linspan_{\eftwo}(d_1,\ldots,d_4)$, and $V$ is of full rank since $B$
defines an onto linear function $\eftwom\to U$.}

Writing $x\in\eftwom$ as $x=\bs{x}\bs{d}^t$ for some 
$\bs{x}\in\eftwo^m$ and letting
$(z_1,\ldots,z_4):=\bs{x}V$, we therefore have  
\begin{equation}\label{eq:gbx}
g(B(x)) = g(\bs{x}V(d_1,\cdots,d_4)^t)=z_1z_2+z_3z_4,
\end{equation}
and by the up-conversion theorem, the evaluation vector of $g(B(X))$ 
is a minimum-weight codeword of $\ebbch{2^{m-1}-2^{m-3}}$. 

Let $V_1\in\eftwo^{m\times (m-4)}$ be such that $V^+:=(V|V_1)$ is
invertible (this is possible, as $V$ is of full rank), and write
$\bs{e}^t=(e_1,\ldots,e_m)^t:=(V^+)^{-1}\bs{d}^t$. 
As $(V^+)^{-1}$ is
invertible, $\bs{e}$ is a basis for $\eftwom/\eftwo$. Also, the above
arbitrary element $x\in \eftwom$ can be written as
\begin{multline}\label{eq:ex}
x=\bs{x}\bs{d}^t=\bs{x}V^+\bs{e}^t = \bs{x}(V|V_1)\bs{e}^t =\\
z_1e_1+\cdots+z_4e_4+z_5e_5+\cdots + z_me_m,
\end{multline}
where $(z_5,\ldots,z_m):=\bs{x}V_1$. It now follows from Theorem
\ref{thm:basis}, (\ref{eq:gbx}), and (\ref{eq:ex}), that
if $\{e'_1,\ldots,e'_m\}$ is the dual of $\{e_1,\ldots,e_m\}$, then
$e'_1,\ldots,e'_4$ satisfy (\ref{eq:i2}).

\section{A heuristic algorithm for the case $i=3$, odd $m$}
\label{app:heuristics} 
For the case where $i=3$ and $m$ is odd, we will describe in this
appendix a probabilistic algorithm that performed well in practice for
both odd and even $m$,
although we do not have a lower bound on its probability of
success. To describe this algorithm, we first need the following
proposition. Recall that for $j\in \bbNp$, $f_j(X_1,X_2) =
X_1^{2^j}X_2+X_1X_2^{2^j}$.

\begin{proposition}\label{prop:singlepol}
Let $c_1\in \eftwom^*$ and $c_2\in \eftwom$ (for $m\in \bbNp$).  Then
for $x_1,x_2\in \eftwom^*$ with $x_1\neq x_2$, it holds that 
\begin{multline}\label{eq:equiv}
\Big(f_1(x_1,x_2)=c_1 \text{ and } f_2(x_1,x_2)=c_2\Big)\iff\\
x_1,x_2\in \roots(c_1X^3+c_2X+c_1^2).
\end{multline}
\end{proposition}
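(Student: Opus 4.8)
The plan is to reduce both implications to a single factorization identity for the cubic $g(X):=c_1X^3+c_2X+c_1^2$, expressed through the elementary symmetric functions of the pair $(x_1,x_2)$. Working throughout in characteristic $2$, I would first abbreviate $p:=x_1x_2$ and $s:=x_1+x_2$ and record the two basic identities $f_1(x_1,x_2)=x_1^2x_2+x_1x_2^2=x_1x_2(x_1+x_2)=ps$ and $f_2(x_1,x_2)=x_1^4x_2+x_1x_2^4=x_1x_2(x_1^3+x_2^3)=ps(s^2+p)$, where the last step uses the characteristic-$2$ expansion $x_1^3+x_2^3=s^3+ps=s(s^2+p)$ (since $3\equiv 1$ and the cross terms combine as $ps$). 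These rewrite the two target conditions $f_1=c_1,\ f_2=c_2$ purely in terms of $p$ and $s$.

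The heart of the argument is the factorization of the cubic whose roots are $x_1$, $x_2$ and $x_1+x_2$: expanding $P(X):=(X+x_1)(X+x_2)(X+x_1+x_2)$ yields $P(X)=X^3+(s^2+p)X+ps$, the $X^2$-term vanishing because the three roots sum to $0$. Comparing coefficients, and using $c_1\neq 0$, I would then establish that the identity $g(X)=c_1P(X)$ holds if and only if $c_1=ps$ (from the constant terms $c_1^2=c_1\cdot ps$) and $c_2=c_1(s^2+p)$ (from the linear terms); by the previous paragraph this is exactly the pair of conditions $f_1(x_1,x_2)=c_1$ and $f_2(x_1,x_2)=c_2$.

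For the forward implication this identity gives everything at once: if $f_1=c_1$ and $f_2=c_2$, then $g(X)=c_1(X+x_1)(X+x_2)(X+x_1+x_2)$, so $\roots(g)=\{x_1,x_2,x_1+x_2\}$ and in particular $x_1,x_2\in\roots(g)$. For the reverse implication I cannot assume the factorization, so I would instead invoke Vieta's relations. Since $c_1\neq 0$, $g$ is a genuine cubic, and having the distinct roots $x_1,x_2$ it factors as $g(X)=c_1(X+x_1)(X+x_2)(X+x_3)$ for a third root $x_3\in\eftwom$. The vanishing $X^2$-coefficient forces $x_1+x_2+x_3=0$, hence $x_3=x_1+x_2=s$; the constant term gives $c_1\cdot x_1x_2x_3=c_1^2$, so $ps=c_1$ and $f_1(x_1,x_2)=c_1$; and the $X$-coefficient gives $c_1(x_1x_2+x_3(x_1+x_2))=c_2$, i.e.\ $c_1(p+s^2)=c_2$, so $f_2(x_1,x_2)=c_1(s^2+p)=c_2$.

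I expect no genuine obstacle here, only two points requiring care. The first is bookkeeping in characteristic $2$ (all signs disappear and $3\equiv 1$, which is what makes $x_1^3+x_2^3=s^3+ps$ and lets the $X^2$-terms cancel in $P$). The second, more delicate, point is the role of the hypotheses $x_1\neq x_2$ and $x_1,x_2\in\eftwom^*$: distinctness guarantees $s=x_1+x_2\neq 0$, which is exactly what legitimizes recovering $x_3=s$ as a genuine \emph{third} root and keeps $g$ a degree-$3$ polynomial, while nonzeroness keeps $f_1=ps$ consistent with $c_1\neq 0$. The reverse direction is the more subtle half, since there the factorization must be reconstructed from the root data rather than assumed.
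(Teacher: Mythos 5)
Your proof is correct, but it takes a genuinely different route from the paper's in both directions. For the forward implication, the paper normalizes by setting $t:=x_2/x_1$, $u:=c_1/x_1^3$, $v:=c_2/x_1^5$, and derives $u^2+u=v$ from the identity $(t^2+t)^2+(t^2+t)=t^4+t$; clearing denominators then says exactly that $x_1$ (and, by symmetry, $x_2$) is a root of $c_1X^3+c_2X+c_1^2$. You instead verify the full factorization $c_1X^3+c_2X+c_1^2=c_1(X+x_1)(X+x_2)(X+x_1+x_2)$ by comparing coefficients in terms of $p=x_1x_2$ and $s=x_1+x_2$, which yields the conclusion in one stroke. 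For the reverse implication, the paper's key observation is that $X\cdot(aX^3+bX+c)$ is an $\eftwo$-linearized polynomial, so the roots of the cubic together with $0$ form an $\eftwo$-subspace of dimension at most $2$; this forces the third root to be $x_1+x_2$, after which coefficients are compared much as you do. You reach the same third root more elementarily, via the factor theorem plus the vanishing $X^2$-coefficient (Vieta). Both arguments are sound. Yours is more self-contained and unifies the two directions through a single factorization identity; the paper's linearization viewpoint is not incidental, however: it is exactly what justifies Step 3 of the heuristic algorithm in Appendix \ref{app:heuristics}, where the roots of $c_1x^4+c_2x^2+c_1^2x$ are found by linear-algebra methods in $O(m^3)$, so the paper's proof doubles as the complexity justification for that algorithm.

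One small inaccuracy in your closing remarks, which does not affect the proof: the hypothesis $x_1\neq x_2$ is what guarantees that $(X+x_1)(X+x_2)$ divides the cubic, not what keeps the cubic of degree $3$ (that is $c_1\neq 0$); and the third root $x_3=x_1+x_2$ is automatically distinct from $x_1,x_2$ because $x_1,x_2\neq 0$, though your Vieta computation never needs that distinctness in the first place.
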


\begin{proof}
$\Rightarrow$: Suppose that $x_1^2x_2+x_1x_2^2=c_1$ and
$x_1^4x_2+x_1x_2^4=c_2$. Dividing the first equation by $x_1^3$ and
the second by $x_1^5$, and setting $t:=x_2/x_1$, $u:=c_1/x_1^3$, and
$v:=c_2/x_1^5$, we obtain $t^2+t=u$, and $t^4+t=v$. Since
$(t^2+t)^2+(t^2+t)=t^4+t$, we must have $u^2+u=v$. Substituting $u=c_1/x_1^3$,
$v=c_2/x_1^5$, we see that $c_1x_1^3+c_2x_1+c_1^2=0$. From symmetry,
the same equation holds also with $x_2$ instead of $x_1$.

$\Leftarrow$: Multiplying a polynomial of the form $g(X):=aX^3+bX+c$
(for some $a,c\in \eftwom^*$, $b\in\eftwom$) by $X$,
we obtain a linearized polynomial of degree $4$. This means that the
roots of $g$ are obtained by removing $0$ from an $\eftwo$-vector
space of dimension $\leq 2$. Since we assume that $X^3+(c_2/c_1)X+c_1$
has $2$ distinct roots $x_1,x_2$, its roots must therefore be
$x_1,x_2,x_1+x_2$, and the free coefficient is 
$x_1x_2(x_1+x_2)=c_1$. This shows that $f_1(x_1,x_2)=c_1$. Similarly,
the coefficient of $X$ is
$x_1x_2+x_1(x_1+x_2)+x_2(x_1+x_2)=x_1^2+x_2^2+x_1x_2$, which must
therefore equal $c_2/c_1=c_2/(x_1^2x_2+x_1x_2^2)$. Equating and
re-arranging terms, it follows that $f_2(x_1,x_2)=c_2$.
\end{proof}

Recall that for $i=3$, we are looking for $\eftwo$-independent
elements $b_1,\ldots,b_6\in\eftwom^*$ that satisfy (\ref{eq:eq31}) and
(\ref{eq:eq32}). Proposition \ref{prop:singlepol} suggests the
following probabilistic algorithm:

\begin{enumerate}

\item Draw $4$ independent elements $b_1,\ldots,b_4\in\eftwom^*$; in
detail: 

\begin{itemize}

\item Draw $4$ distinct elements. 

\item Form the $4\times m$ binary matrix $M$ whose rows are the
coefficients of the decomposition of the $4$ drawn elements according
to some basis. 

\item Apply Gaussian elimination to $M$ (in complexity $O(m)$, since
the number of rows is fixed), and
check if the matrix is of full rank. If it is, keep the $4$ drawn
elements. Otherwise, re-draw $4$ elements and repeat the above steps. 

\end{itemize}

\item Calculate $c_1:=b_1^2b_2+b_1b_2^2+b_3^2b_4+b_3b_4^2$,
$c_2:=b_1^4b_2+b_1b_2^4+b_3^4b_4+b_3b_4^4$. If $c_1=0$, return to Step
1.

\item Find all non-zero solutions of $c_1x^4+c_2x^2+c_1^2x=0$. Note
that since the involved polynomial is linearized, this
can be done by linear-algebra methods in complexity $O(m^3)$.

\item If there are less than $3$ non-zero solutions, return to Step 1.

\item If there are $3$ solutions, pick any two distinct solutions as
$b_5,b_6$, and check as above that $b_1,\ldots, b_6$ are linearly
independent (complexity: $O(m)$). If so, output
$b_1,\ldots,b_6$. Otherwise, return to step 1.

\end{enumerate}

\begin{IEEEbiographynophoto}{Amit Berman}
(Senior Member, IEEE) received the Ph.D. degree in
electrical engineering from the Technion---Israel Institute of
Technology, Haifa, Israel, in 2013. He is currently with the Advanced
Flash Solution Laboratory, Samsung Electronics Memory Division, as the
Vice President of Research and Development. He has authored over 30
research articles and holds over 100 U.S./Global issued and pending
patents. He was a recipient of several awards, including the Samsung
Contribution Award, the Hershel Rich Innovation Award, the Mitchell
Grant, the HPI Fellowship, and the International Solid-State Circuits
Conference Recognition. 
\end{IEEEbiographynophoto}

\begin{IEEEbiographynophoto}{Yaron Shany}
received the Ph.D.~degree in Electrical Engineering from Tel Aviv
University in 2004.  He is currently with the Advanced Flash
Solution Lab of Samsung Semiconductor Israel R\&D Center. His research
interests include coding theory and its applications to storage
systems and devices. 
\end{IEEEbiographynophoto}

\begin{IEEEbiographynophoto}{Itzhak Tamo}
received his B.A. degree in Mathematics and his B.Sc. and
Ph.D. degrees in Electrical Engineering from Ben-Gurion University,
Israel, in 2008 and 2012, respectively. From 2012 to 2014, he was a
Post-Doctoral Researcher with the Institute for Systems Research at
the University of Maryland, College Park. Since 2015, he has been with
the Electrical Engineering Department at Tel Aviv University, Israel,
where he was promoted to Associate Professor in 2020. His research
interests include storage systems and devices, coding, information
theory, and combinatorics. He was a recipient of the IEEE
Communication Society Data Storage Technical Committee 2013 Best Paper
Award. He also received the 2015 IEEE Information 
Theory Society Paper Award and the Krill Prize in 2018.
\end{IEEEbiographynophoto}

\end{document}